\RequirePackage[2018-12-01]{latexrelease}
\documentclass{rQUF2e}

\usepackage{amsmath}
\usepackage{amsthm}
\usepackage{amssymb}
\usepackage{amsfonts}
\usepackage{bm}
\usepackage{bbm}
\usepackage{color}
\usepackage{graphicx}
\usepackage{hyperref}
\usepackage{fancyhdr}
\usepackage{graphicx}
\usepackage{slashed}
\usepackage{latexsym,epsfig}
\usepackage{algorithm,algpseudocode}	
\usepackage{enumerate}
\usepackage{mathtools}
\usepackage{physics}
\usepackage[capitalize]{cleveref}

\definecolor{blue}{rgb}{0,0.2,1}
\definecolor{Pr}{rgb}{0.4,0.3,0.9}

\newcommand{\Ord}[1]{\mathcal{O}\left( #1 \right)}
\newcommand{\tOrd}[1]{\widetilde{\mathcal{O}}\left( #1 \right)}

\newcommand{\Var}{\mathrm{Var}}
\newcommand\super[1]{^{(#1)}}

\theoremstyle{plain}

\newtheorem{problem}{Problem}
\newtheorem{theorem}{Theorem}

\newtheorem{lemma}{Lemma}

\newtheorem{defn}{Definition}
\newtheorem{cor}{Corollary}
\newtheorem{fact}{Fact}

\def\be{\begin{eqnarray}}
\def\ee{\end{eqnarray}}

\DeclareMathOperator{\EE}{\mathbb{E}}

\def\calB{\mathcal{B}}

\def\RR{\mathbb{R}}

\def\QA{\mathsf{QA}}
\def\QS{\mathsf{QS}}
\def\VA{\mathsf{VA}}
\def\SA{\mathsf{SA}}
\def\QC{\mathsf{QC}}
\def\TV{\mathrm{TV}}
\def\CVA{\mathrm{CVA}}

\DeclareMathOperator{\id}{\mathbbm{1}}
\DeclareMathOperator{\lvec}{vec}
\DeclarePairedDelimiter{\ceil}{\lceil}{\rceil}

\DeclareMathOperator*{\argmin}{arg\,min}

\begin{document}
\title{Quantum Advantage for Multi-option Portfolio Pricing and Valuation Adjustments}

\author{
Jeong Yu Han$^{\ast 1}$\footnote{$^\ast$ jeong.yuhan@gmail.com},
Bin Cheng$^{\dag 2}$ \thanks{$^\dag$ bincheng@nus.edu.sg},
Dinh-Long Vu$^{\ddag 2}$ \thanks{$^\ddag$ long-vu@nus.edu.sg},
and Patrick Rebentrost$^{\S 1, 2}$ \thanks{$^\S$ cqtfpr@nus.edu.sg} \\
\affil{$^1$ Department of Computer Science, National University of Singapore, Singapore \\
$^2$ Centre for Quantum Technologies, National University of Singapore, Singapore, Singapore} }



\maketitle

\begin{abstract}
A critical problem in the financial world deals with the management of risk, from regulatory risk to portfolio risk. Many such problems involve the analysis of securities modelled by complex dynamics that cannot be captured analytically, and hence rely on numerical techniques that simulate the stochastic nature of the underlying variables. These techniques may be computationally difficult or demanding. Hence, improving these methods offers a variety of opportunities for quantum algorithms. 
In this work, we study the problem of Credit Valuation Adjustments (CVAs) which has significant importance in the valuation of derivative portfolios. As a variant, we also consider the problem of pricing a portfolio of many different financial options.
We propose quantum algorithms that accelerate statistical sampling processes to approximate the price of the multi-option portfolio and the CVA under different measures of dispersion.
Technically, our algorithms are based on enhancing the quantum Monte Carlo (QMC) algorithms by Montanaro with an unbiased version of quantum amplitude estimation.
We analyse the conditions under which we may employ these techniques and demonstrate the application of QMC techniques on CVA approximation when particular bounds for the variance of CVA are known.
\end{abstract}

\begin{keywords}
    Credit Valuation Adjustments, Quantum Algorithms, Derivative Pricing, Quantitative Finance
\end{keywords}

\section{Introduction} \label{intro}
\subsection{Quantitative Finance}
Derivatives are financial securities whose values are derived from underlying asset(s).   
The derivatives market has seen a rapid expansion and is estimated to be up to more than ten times of the global Gross Domestic Product \cite{Stankovska2017}. 
Financial theory has co-evolved with this expansion. 
A derivative contract typically consists of payoff functions that are often dependent upon market state variables realised in the future, which are necessarily random.  Study of these random variables and accurately pricing these financial contracts is the principle task of quantitative finance, which lends itself to theoretical treatment using measure and probability theory, stochastic processes, and numerical methods. In particular,  the evolution of the underlying assets are modelled as Ito processes \cite{Shreve2004}, and may be analysed under the settings of stochastic calculus.  The evolution of the price process then can be modelled using stochastic differential equations (SDEs).

Unfortunately,  analytical solutions are often not known or are prohibitively complex to formulate,  especially when one considers the interactions between a basket of derivative contracts.  Such complications are the main reasons why Monte Carlo engines have become an integral part of financial modelling,  providing a general,  numerical approach to obtain solutions to compute expectations of random variables even in high dimensional settings.

In cases when an analytical solution is possible, derivative products may be valued by the distinguished Black-Scholes-Merton (BSM) \cite{Black1973} formula. 
However, this formula makes a set of well-known but limiting assumptions.
Of primary concern to the dealing house involved in the transaction of financial derivatives may be outlined; to what extent should the valuation of derivative portfolios go beyond the BSM model based on the idiosyncratic characteristics of the parties in question? Such a question is answered in reality by the practice of XVAs, where the `X' refer to a number of items such as Credit (C), Debt (D),  Funding (F),  Margin (M), Capital Valuation (K) and the `VA' referring to Valuation Adjustments \cite{Green2015}. Each term pertains to one of credit,  funding and regulatory capital requirements that modifies the adjusted derivative portfolio value compared to the BSM price. Further elaborations on the CVA problem and the role of XVA desks are presented in \cref{append_CVA}.

Our examination in particular concerns the credit component, also known as CVA.  While the collapse of Lehman Brothers 
\cite{Lehman2013} is likely the most ill-famed credit event in popular culture,  credit events are anything but a rare occurrence in finance.  The credit crisis of 2007 and collapse of Lehman Brothers in 2008 brought to attention systemic risks in the financial markets and the need for better modelling of risks. In response, regulatory frameworks have been developed by the Basel Committee on Banking Supervision to mitigate further risks of financial crisis.
Despite stricter controls and updates to the theory of derivative pricing, the market still differs significantly in pricing practice \cite{Zeitsch2017} with XVA desks applying varying levels of adjustments. This is attributed to the complexity of the modelling and controversial modelling standards, resulting in divergent prices and two-tier markets. Failure to accurately model these risks have resulted in large losses. The blowup of Archegos Capital in March 2021 is an example of institutional failure to accurately price counterparty credit risk, and the losses incurred by the likes of Credit Suisse and Nomura Holdings evidence the consequences \cite{archegos}. It follows that our subject of interest is of critical importance in both methodology and implementation therein.

\subsection{Quantum algorithms}
Quantum computing exploits quantum mechanical phenomena such as superposition and entanglement to perform computation on quantum states formed by quantum bits (qubits).  In some cases,  quantum algorithms promise speedups over their classical counterparts. An example is the integer factoring problem, where the Shor's algorithm \cite{Shor1999} allows an exponential speedup over classical algorithms in the factorization of integers. Another useful algorithm in search problems under the quantum setting is known as the Grover's search, \cite{Grover1996}, that allows the search for specific data in an unstructured table using $\Ord {\sqrt N}$ queries, promising quadratic speedups to the best classical counterpart.

Another well-known algorithm that proves particularly useful is Quantum Amplitude Estimation (QAE) \cite{brassard2002quantum}, which estimates the amplitude of an arbitrary quantum state $\ket{\psi}$ within a subspace of the state space. 
For example, advancements in Quantum Monte Carlo (QMC) techniques \cite{Montanaro2015} have been achieved by generalizing QAE and other useful algorithms \cite{durr1996} as subroutines.
Several variants of QAE have been developed~\cite{grinko_iterative_2021,giurgica-tiron_low_2022,cornelissen_sublinear-time_2023,rall_amplitude_2023}, with unbiasedness emerging as a prevalent concept.
Unbiased QAE turns out to be a useful subroutine for estimating partition functions~\cite{cornelissen_sublinear-time_2023} and constructing low-depth algorithms~\cite{vu_low_2024}.

Several works consider the applications of quantum computing in finance. A general review of today's challenges in quantum finance can be found \cite{bouland2020prospects}, where additional literature such as quantum linear algebra and quantum machine learning are discussed.

\subsection{Monte Carlo methods for CVA computations}

Monte Carlo is a statistical sampling method used to estimate properties of statistical distributions that are difficult to estimate analytically.  Consider a non-empty set of random variables $S$ each from some, not necessarily identical, distribution.  Let $g$ be a Borel-measurable function over $S$ on $\mathbb{R}$. Our objective is to find the mean value $\mu = \EE \left[g(S)\right]$.  We can calculate $\bar{x}$, the sample average as an estimate of $\mu$ via statistical sampling. Assuming that the variance is bounded by $\sigma^2$, Chebyshev's inequality guarantees an upper bound on the probability of the accuracy of our estimate $\bar{x}$ up to error $\epsilon$, such that $\mathbbm P\left[ |\bar{x} - \mu| \geq \epsilon \right] \leq \frac{\sigma ^2}{n\epsilon^2}$. The statistical exhibit of quadratic dependence on $\frac{\sigma}{\epsilon}$ is less than desirable, since the number of samples required to obtain an estimate up to additive error of $\epsilon$ is $n \in \Ord{\frac{\sigma^2}{\epsilon^2}}$.

Notably, Montanaro's work demonstrates the near quadratic speedup over the best classical methods in the estimation of mean output values of arbitrary randomized algorithms in the general settings for QMC~\cite{Montanaro2015}.
Since then, new research has found improvements in the quantum advantage for multilevel Monte Carlo methods for SDEs found in mathematical finance \cite{An2020}, as well as for quantum multi-variate Monte Carlo problems \cite{Arjan2021}.
But the QMC algorithms developed in \cite{Montanaro2015} are based on the convectional QAE~\cite{brassard2002quantum} and therefore, is not guaranteed to output unbiased estimators.
For some applications, it is desirable to consider unbiased QMC algorithms.

Classical algorithms to approximate CVA computations also use the Monte Carlo engines~\cite{Green2015}.
But until recently, little to no known literature has expounded on quantising the CVA computation. A recent work demonstrated the first attempt of quantising the CVA formula, introducing numerous heuristics and the adoption of QAE variants to reduce the circuit depth and resource requirements for implementations on near-term quantum devices~\cite{Zapata2021}. 
In particular, a Bayesian variant using engineered likelihood functions was explored, while using standard techniques for accelerating Monte Carlo sampling techniques by Montanaro. 
However, in the fault-tolerant setting, it remains unexplored the theoretically provable quantum speedup from applying QMC to CVA.

\subsection{Our results}
In this work, we study the problems of multi-option pricing and CVA, which can be formalised as mean value estimation problems. 
We show that CVA can be viewed as a version of the multi-option pricing problem.
We then develop quantum algorithms for multi-option pricing and CVA using quantum Monte Carlo.
Using the unbiased quantum amplitude estimation algorithm~\cite{cornelissen_sublinear-time_2023}, we rework and enhance the quantum Monte Carlo algorithms in Ref.~\cite{Montanaro2015} under several different settings, which might be of independent interest.

We find that under guarantees of a bounded variance in the CVA, we may provide better results for approximating the CVA value than the general settings \cite{Zapata2021}. We find that such guarantees are reasonably common and useful in practical settings. In particular, we may obtain an approximation of the CVA value up to desired additive error $\epsilon$ using $\tOrd{\frac{\sigma (1 - R)}{\epsilon} \log{\frac{1}{\delta}}} $ queries with success probability $1 - \delta$, where $R$ is the recovery rate (\cref{theorem_additive}).

Additionally, if we are given a variance bound such that $\Var(\CVA) \leq B \cdot \mathbb{E}[{\CVA}]^2$ for some $B > 0$, then we may obtain an approximation of the CVA value up to desired relative error $\epsilon$ using  $\tOrd{\frac{B}{\epsilon} \log{\frac{1}{\delta}}}$ queries with success probability $1 - \delta$. See \cref{theorem_relative} for a precise statement of this result. We compare these algorithms to the setting where no variance guarantees can be obtained.

\subsection{Organization of this work}
In \cref{preliminaries}, we discuss preliminaries and notations required for formalising and introducing our problem settings. 
We provide formal definitions for the multi-option pricing problem and the CVA problem in \cref{problem_statements}, where we see that the CVA problem can be framed as a variant of the multi-option pricing problem. We discuss quantum subroutines and relevant theory in \cref{subroutines}. In \cref{solutions}, we argue that the solutions to the problem statements may be found using these algorithms. We conclude and provide some guidance for possible future work in \cref{conclusions}. In the \cref{append_CVA}, we discuss more on CVAs and provide the derivation of the CVA formula. We discuss relevant quantitative finance topics of obtaining default probabilities and discount factors in \cref{append_quant}.

\section{Preliminaries} \label{preliminaries}
In this section, we discuss notations and definitions.

\subsection{Mathematical preliminaries}
The following are mathematical notations and definitions used in discussions throughout the paper.
We denote by $\Vert \vb{v} \Vert_1 := \sum_{i=1}^n |v_i|$ the $\ell_1$-norm of the vector. 
The simplex of non-negative, $\ell_1$-normalized $N$-dimensional vectors are defined as $\Delta^N := \left \{ \vb{u} \in [0, +\infty)^{N} : \sum_{k=1}^N u_k = 1\right \}$.
We denote by $\vb{u} \cdot \vb{v} = \sum_{i=1}^N u_i v_i$ the inner product between two vectors.
For any matrix $A = (\vb{a}_1, \cdots, \vb{a}_m)$ with columns $\vb{a}_j, \forall j\in[m]$, define the vector constructed by stacking the columns of $A$ as ${\rm vec}(A)$ as the vectorization of a matrix.

In our work, when working with numbers in the classical setting, we assume an arithmetic model with negligible encoding errors. Additionally, arithmetic operations all cost $\Ord{1}$ time. When working with operations in the quantum setting, we use the fixed-point encoding as defined below, and also assume an arithmetic model with negligible encoding errors and $\Ord{1}$ time.

\begin{defn}[Notation for fixed-point encoding of real numbers] \label{encoding}
Let $c_1,c_2$ be positive integers and let $r \in [0, 2^{c_1}]$. 
There exist $\vb{a} \in \{0, 1\}^{c_1}$ and $\vb{b} \in \{0, 1\}^{c_2}$ such that $|r - a_{c_1} \cdots a_1.b_1 \cdots b_{c_2}| \leq 1/2^{c_2}$, where $a_{c_1} \cdots a_1.b_1 \cdots b_{c_2} = 2^{c_1-1}a_{c_1} + \cdots + 2 a_2 + a_1+ \frac{1}{2} b_1 + \cdots + \frac{1}{2^{c_2}} b_{c_2}$.
Define the binary encoding of $r$ to precision $2^{-c_2}$ to be $\calB(r, c_1, c_2) := (\vb{a}, \vb{b})$.
\end{defn}

Note that negative numbers can also be represented with fixed-point encoding. 
It suffices to add an extra bit to represent the sign.
Next, we define the access to elements of a vector in the classical setting under the query access model and the sampling access model.
\begin{defn} [Vector access]\label{defCA}
Let $c_1, c_2$ and $n$ be positive integers
and $\mathbf{u} \in [0, 2^{c_1}]^n$ be a vector. 
We say that we have access to a vector $\vb{u}$ if we have access to the mapping $j \to \calB(u_j, c_1, c_2)$.
We denote this access by $\VA(\vb{u},n,c_1, c_2)$ and the time for a query by $\mathcal T_{\VA(\vb{u}, n,c_1, c_2)}$. When there is no ambiguity of the inputs, we use the shorthand notation $\VA(\vb{u})$.
\end{defn}

\begin{defn} [Sampling access]\label{defSA}
Let $\vb{v} \in \RR^n$ be a vector. 
We say we have sampling access to $\vb{v}$  if we can draw a sample $j\in [n]$ with probability $|v_j|/\Vert \vb{v} \Vert_1$. 
We denote this access by $\SA(\vb{v})$.
\end{defn}

We make use of quantum subroutines in our work to perform sampling on vector elements for inner product estimations. The classical analogue is presented in the following \cref{lemmaSamplingl1}, which has been adapted from \cite{Tang2018} and written as in \cite{rebentrost2021alphatron}.
\begin{lemma}[Inner product with $\ell_1$-sampling]
	\label{lemmaSamplingl1}
	Let $\epsilon,\delta \in (0,1)$. Given vector access to $\vb{v} \in\mathbbm R^n$ and
	sampling access to $\vb{u} \in\mathbbm R^n$,
	we can determine $ \vb{u}\cdot \vb{v}$ to additive error $\epsilon$ with success probability at least
	$1-\delta$ with $\Ord{\frac{\Vert \vb{u} \Vert_{1}^2 \Vert \vb{v} \Vert_{\max}^2}{\epsilon^2} \log \frac{1}{\delta}}$
	queries and samples.
\end{lemma}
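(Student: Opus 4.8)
The plan is to build an unbiased importance-sampling estimator of $\bm u \cdot \bm v$ whose sampling distribution is exactly the $\ell_1$-distribution supplied by $\bm{SA}(\bm u)$, to bound its variance, and then to amplify the resulting weak estimate to confidence $1-\delta$ by a median-of-means argument.

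First I would fix the single-sample estimator. Drawing $j \in [n]$ from $\bm{SA}(\bm u)$ produces an index with probability $p_j = |u_j|/\Vert \bm u \Vert_1$, and querying $\bm v$ returns $v_j$. Setting $X := u_j v_j / p_j = \mathrm{sign}(u_j)\,\Vert \bm u \Vert_1\, v_j$ gives $\E[X] = \sum_j u_j v_j = \bm u \cdot \bm v$, so the estimator is unbiased. Note that this step implicitly requires, beyond the raw sample, the sign of $u_j$ and the normalisation $\Vert \bm u \Vert_1$; these come with the access to $\bm u$ in the sample-and-query model, and I would state this dependence explicitly.

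The technical heart is the variance bound. Computing the second moment gives $\E[X^2] = \Vert \bm u \Vert_1 \sum_j |u_j|\, v_j^2$, and pulling out $\max_j v_j^2 = \Vert \bm v \Vert_{\max}^2$ yields $\E[X^2] \le \Vert \bm u \Vert_1^2 \Vert \bm v \Vert_{\max}^2$, hence the same bound on $\Var(X)$. This is the one place where the precise shape of the final complexity is pinned down, in particular the appearance of $\Vert \bm v \Vert_{\max}$ rather than some other norm of $\bm v$, so it is the step I expect to matter most; there is no deep obstacle beyond choosing this estimate carefully.

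It then remains to assemble the standard amplification. Averaging $k = \Ord{\Vert \bm u \Vert_1^2 \Vert \bm v \Vert_{\max}^2/\epsilon^2}$ independent copies and invoking Chebyshev's inequality gives an $\epsilon$-accurate estimate with constant success probability; taking the median of $\Ord{\log \frac{1}{\delta}}$ such block-averages and bounding the number of good blocks by a Chernoff inequality boosts this to $1-\delta$. Multiplying the two counts yields the claimed $\Ord{\frac{\Vert \bm u \Vert_1^2 \Vert \bm v \Vert_{\max}^2}{\epsilon^2}\log\frac{1}{\delta}}$ samples and queries, and charging each of them the access cost together with $\Ord{1}$ fixed-point arithmetic accounts for the additional logarithmic factors absorbed into the $\tOrd{\cdot}$ time bound.
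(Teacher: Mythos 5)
Your proposal is correct and follows essentially the same route as the paper's proof: the identical importance-sampling estimator $\mathrm{sign}(u_j)\Vert\bm u\Vert_1 v_j$, the same second-moment bound $\Vert\bm u\Vert_1^2\Vert\bm v\Vert_{\max}^2$, and the same Chebyshev-plus-Chernoff median-of-means amplification. Your explicit remark that the sign of $u_j$ and the norm $\Vert\bm u\Vert_1$ must accompany the sample is a small but worthwhile clarification that the paper leaves implicit.
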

\begin{proof}
Define a random variable $Z$ with outcome ${\rm sgn}(u_j) \Vert \vb{u}\Vert_1 v_j$ with probability $\vert u_j\vert/\Vert \vb{u}\Vert_1$.
Note that $\mathbbm E[Z] = \sum_j {\rm sgn}(u_j) \Vert \vb{u}\Vert_1 v_j \vert u_j\vert /\Vert \vb{u}\Vert_1 = \vb{u} \cdot \vb{v}$.
Also, $$\Var[Z] \leq \mathbbm E[Z^2] = \sum_j \Vert \vb{u}\Vert_1^2 v_j^2 \vert u_j\vert/\Vert \vb{u}\Vert_1 \leq
\Vert \vb{u}\Vert_1^2 \Vert  \vb{v} \Vert_{\max}^2.$$
Take the median of $6 \log 1/\delta$ evaluations of the mean of
$9\Vert \vb{u}\Vert_1^2 \Vert  \vb{v} \Vert_{\max}^2/(2\epsilon^2)$ samples of $Z$. Then, by using the Chebyshev and Chernoff inequalities, we obtain an $\epsilon$ additive error estimation of $\vb{u} \cdot \vb{v}$
with probability at least $1-\delta$ in $\Ord{\frac{\Vert \vb{u}\Vert_1^2 \Vert  \vb{v} \Vert_{\max}^2}{\epsilon^2}\log \frac{1}{\delta}}$ queries. 
\end{proof}

\subsection{Probability preliminaries}
To aid in the formal treatment of derivative pricing and CVA concepts, we introduce relevant concepts in probability theory required to construct arguments on asset price dynamics.

\emph{Probability Spaces and Filtrations} --- We denote an arbitrary probability space by $(\Omega,  \mathcal F,  \mathbbm Q)$ where $\Omega$ is a non-empty sample space, $\mathcal F$ the filtration and $\mathbbm Q$ the probability measure.  Consider for some fixed, positive integer $T$, we let $\mathcal F_t, 0 \leq t \leq T$ be a filtration of sub-$\sigma$ algebras of $\mathcal F$. We may take $\mathcal F_t$ then to be market state variables and information available up to time $t$. 
 
\emph{Stochastic Processes} --- Consider a collection of random variables $V_t$  and a collection of $\sigma$-algebras $\mathcal{F}_t,  0 \leq t \leq T$, for which $\mathcal F_0 \subseteq \mathcal F_1 \subseteq \cdots \subseteq \mathcal F_{T-1} \subseteq \mathcal F_T$, over a non-empty sample space $\Omega$ and fixed positive integer $T$. The collection of random variables denoted by $V$ and indexed by $t$ is an adapted stochastic process if $\forall t \in [T]$, $V_t$ is $\mathcal{F}_t$ measurable. 
 
\emph{Probability Measures} --- Consider two random variables $X, Y : \Omega \to \mathbb{R}$ taking some real values. Let $\mathbbm Q$ be the joint distribution measure for $(X, Y)$, then we have $\mathbbm Q((X, Y) \in \mathbbm C)$ defined for all Borel sets $\mathbbm C \subset \mathbbm R^2$.  Further assume that $X$ and $Y$ are independent. Then, their distribution measure factors into $\mathbbm Q_X\left(X \in \mathbbm C\right)$ and $\mathbbm Q_Y\left(Y \in \mathbbm C\right)$.

\subsection{Financial preliminaries}
\label{subsec:financial_preliminaries}

The derivative pricing and CVA pricing problems relate different variables pertaining to the contractual agreement and market state variables; here we discuss some of these variables involved in such computations.

\emph{Portfolio Process} --- Assume a probability space $(\Omega, \mathcal F, \mathbbm Q)$, where  $\Omega$ is the set of economic events,  $\mathcal F$ is the sigma algebra for $\Omega$, and a probability measure $\mathbbm Q$.  Let $T$ be a fixed, positive integer denoting the number of time steps in the model economy. We let $\mathcal F_t, 0 \leq t \leq T$ be a filtration of sub-$\sigma$ algebras of $\mathcal F$, where $\mathcal F_0 \subseteq \mathcal F_1 \subseteq \cdots \subseteq \mathcal F_T =\mathcal F$. Define the discounted portfolio value (of a single derivative or a basket of derivatives) to be a random variable $V: \Omega \to [0,\infty)$.  Let $V_t$ be a $\mathcal F_t$ measurable adapted stochastic process, representing the discounted portfolio value at time $t$,  $\forall t \in [T]$.  

There are numerous events that can lead to a credit event, and of the most common nature may be attributed to operations of financially unsound nature. We formally define the concepts of random variables on default times, survival probabilities and the recovery rates.

\emph{Credit/Default Event} --- Let $\tau : \Omega \to [T] \cup \{\infty\}$ be the random variable for the time of a credit event (e.g., a  bankruptcy). The list of credit events include but are not limited to those outlined by the International Swaps and Derivatives Association \cite{isda2003}. Further discussions on Financial Law is not discussed as it is not central to our work.   

\emph{Default Probabilities} --- Let $\phi : \{t : t \in [T] \cup \{\infty\} \} \to [0,1)$ be the cumulative distribution function for credit default, such that $\phi(\tau < t)$ is the probability that some counterparty in concern defaults at time prior to $t$.  $\phi$ shall be defined in the range of $\tau$, the time of a credit event. Without ambiguity, define $\phi(t_i \leq \tau \leq t_j) := \phi(t_j) - \phi(t_i)$ to be the probability of default between two time instances over the domain $[T]$, where $t_i \leq t_j$. The default probabilities implied by the Credit Default Swap (CDS) market may be obtained by bootstrapping hazard rates under the risk-neutral measure. For a more detailed explanation on deriving default probabilities from the CDS curve, refer to \cref{cds_bootsrapping}. 

\emph{Recovery Rate} --- Let $R \in (0,1)$ be the Recovery Rate,  a percentage of the value of the portfolio that may be expected to be recovered in the event of default of the counterparty. The percentage $(1 - R)$ can then be defined as the Loss Given Default (LGD) rate, representing as percentage of the positive exposure subject to loss under default.
 
\emph{Discounting} --- Let $r_t \in (0,1)$ be the short rate at time $t \geq 0$. Then the discount process is $\exp\left\{ - \int^t_0 r_u du \right \}$ and for future valuations of portfolio $V_t$, the discounted portfolio value today is written $V_0 = \mathbb E\left[ \exp \{ -\int^t_0 r_u du \} V_t \right]$. Short rates may be defined to be deterministic or stochastic. Short rate models may be calibrated to the yield curve. For a more detailed explanation, refer to \cref{irterms}.

\subsection{Quantum preliminaries}

\emph{Quantum query (multiple) access} --- We define quantum query access for obtaining the superposition over elements of a vector.
\begin{defn} [Quantum query access]\label{defQA}
Let $c_1, c_2$ and $n$ be positive integers
and $\vb{u} \in [0, 2^{c_1}]^n$ be a vector. 
We say that we have  quantum access to $\vb{u}$ if, for arbitrary $b \in \{0,1\}^{c_1+c_2}$, 
\be
\ket j \ket{b} \to \ket j \ket{b \oplus \calB(u_j, c_1, c_2)}. 
\ee
We denote this access by $\QA(\vb{u}, n,c_1, c_2)$. Denote the time for a query by $\mathcal T_{\QA(\vb{u}, n,c_1, c_2)}$. 
When there is no ambiguity of the inputs, we use the shorthand notations $\QA(\vb{u})$ and 
\begin{align}
    \ket{u_j} := \ket{\calB(u_j, c_1, c_2)} \ .
\end{align}
The quantum oracle query on a superposition follows directly $
\sum^n_{j=1} \ket j \ket{0^{c_1+c_2}} \to \sum^n_{j=1} \ket j \ket{ u_j}$.
\end{defn}

\begin{defn} [Quantum matrix access/Quantum multi-vector access]\label{defQM}
Let $c$, $n$, and $m$ be positive integers
and $\vb{u}_1,\cdots,\vb{u}_m$ be
$m$ vectors of bit strings $\in [0, 2^c]^n$. 
We say that we have  quantum access to the matrix $\vb{A}:= (\vb{u}_1,\cdots, \vb{u}_m)$ if we have access to $\QA({\rm vec} (\vb{A}))$.
Note that we can interpret this access as a superposition access to the set of inputs  $\QA(\vb{u}_1),$ $\cdots,$ $\QA(\vb{u}_m)$. It allows the operation $\ket i \ket j \ket{0^c} \to \ket i \ket j \ket{(\vb{u}_i)_j}$, for $i\in[m]$ and $j \in[n]$.
\end{defn}

\emph{Quantum sampling (multiple) access} --- We define quantum sampling access into a superposition of basis states, where the probability of observing $j$ under a measurement corresponds to the square of its amplitude.

\begin{defn} [Quantum sample access]\label{defQS}
Let $c$ and $n$ be two positive integers
and $\vb{v} \in [0, 2^{c}]^n$ be a vector. 
Define quantum sample access to $\vb{v}$ via the operation 
\begin{align}
\ket{\bar 0} \to \frac{1}{\sqrt{\Vert \vb{v} \Vert_1}} \sum_{j=1}^n  \sqrt{v_j} \ket j.
\end{align}
We denote this access by $\QS(\vb{v})$. Denote the time for a query by $\mathcal T_{\QS(\vb{v})}$. 
\end{defn}

\begin{defn} [Quantum multi-sample access]\label{defMS}
Let $c$, $n$, and $T$ be positive integers
and $\vb{v}_1,\cdots,\vb{ v}_T \in [0, 2^c]^n$ be vectors.
Define quantum multi-sample access to $\vb{V}:=( \vb{v}_1, \cdots, \vb{v}_T)$ via the operation
\begin{align}
\ket{\bar 0} \to  \sum_{i=1}^T \sum_{j=1}^n  \sqrt {\frac{ (\vb{v}_i)_j}{\Vert \mathrm{vec}(\vb{V}) \Vert_1}} \ket i \ket j.
\end{align}
We denote this access by $\QS(\{\vb{v}_i\}_{i\in[T]})$. Denote the time for a query by $\mathcal T_{\QS(\{\vb{v}_i\}_{i\in[T]})}$. 
\end{defn}

We note that this is just an instance of the \cref{defQS}. Consider the vectorization $\mathrm{vec}(\vb{V})$, the column vector of dimension $nT$. Now consider the $\QS({\rm vec}(\vb{V}))$, which by definition provides the access:
\begin{align}
\ket{\bar 0} &\to \frac{1}{\sqrt{\Vert \mathrm{vec}(\vb{V}) \Vert_1}} \sum_{k=1}^{nT} \sqrt{\mathrm{vec}(\vb{V})_k} \ket k \\ 
&= \frac{1}{\sqrt{\sum_{k=1}^{nT} \vert  \mathrm{vec}(\vb{V})_k } \vert} \sum_{i=1}^T \sum_{j=1}^n  \sqrt{ v_{ij} } \ket i \ket j \\ 
&= \frac{1}{\sqrt{\sum^T_{i= 1} \sum^n_{j = 1} \vert v_{ij} \vert}} \sum_{i=1}^T \sum_{j=1}^n  \sqrt{ v_{ij} } \ket i \ket j \ ,
\end{align}
where $v_{ij} := (\vb{v}_i)_j$.

We note the fact that any classical circuit can be implemented by an equivalent, reversible quantum circuit of unitary mappings.

\begin{fact} [Reversible Logic Synthesis \cite{nielsen2002quantum}] \label{rls}
Given any classical arithmetic computation implemented by $\mathcal T_{cl}$ gates, we may implement an equivalent quantum circuit using $\tOrd{\mathcal T_{cl}}$ gates.
\end{fact}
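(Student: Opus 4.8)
The plan is to reduce the given classical computation to a universal set of reversible gates and then compile those into elementary quantum gates, while controlling the garbage bits by Bennett's uncomputation trick. First I would fix a universal classical gate set, say NAND together with fan-out, and write the given computation as a circuit $C$ of $\mathcal{T}_{cl}$ such gates. The obstruction to a direct quantum implementation is that NAND is not injective, so it cannot be realized by a unitary acting in place on the same registers; I would instead embed each classical gate into a reversible gate acting on the input bits together with fresh ancilla bits initialized to $\ket{0}$. The Toffoli gate (controlled-controlled-NOT) is universal for reversible computation and simulates NAND using one ancilla, so each classical gate is replaced by $\Ord{1}$ Toffoli gates. This produces a reversible circuit $\tilde C$ implementing $\ket{x}\ket{0}\to\ket{f(x)}\ket{g(x)}$, where $g(x)$ is accumulated garbage.

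Next I would remove the garbage, which is essential for coherent use inside a superposition: as long as $\ket{g(x)}$ is present it is entangled with the input and spoils interference. Here I would invoke Bennett's compute--copy--uncompute scheme. Run $\tilde C$ to obtain $\ket{x}\ket{f(x)}\ket{g(x)}$ on the work registers, copy the output into a fresh output register by a layer of CNOT gates, $\ket{x}\ket{f(x)}\ket{g(x)}\ket{0}\to\ket{x}\ket{f(x)}\ket{g(x)}\ket{f(x)}$, and then apply $\tilde C^\dagger$ to uncompute the first block back to $\ket{x}\ket{0}\ket{0}$. The net map is the clean unitary $\ket{x}\ket{0}\to\ket{x}\ket{f(x)}$, at the cost of running the reversible circuit twice, i.e. a factor-two gate overhead.

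Finally I would decompose each Toffoli and CNOT into one- and two-qubit gates from a fixed universal quantum gate set (for instance Hadamard, phase, and CNOT), which requires only a constant number of elementary gates per reversible gate and introduces no approximation error for these particular gates. Collecting the three steps, the total elementary-gate count is $\Ord{\mathcal{T}_{cl}}$, which certainly lies within $\tOrd{\mathcal{T}_{cl}}$; the logarithmic slack in $\tOrd{\cdot}$ comfortably absorbs any ancilla-indexing or bookkeeping overhead.

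The main obstacle I anticipate is not the gate-count arithmetic but the garbage management: a naive reversible embedding leaves ancillae entangled with the data, so the construction is only legitimate once Bennett's uncomputation is applied, and one must verify that every ancilla register is returned exactly to $\ket{0}$ so that the resulting operator is genuinely unitary and can be used coherently as a subroutine (e.g. inside amplitude estimation). This correctness-of-uncomputation check, rather than any asymptotic estimate, is where I would focus the detailed verification.
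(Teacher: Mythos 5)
Your argument is correct and is exactly the standard construction from the cited reference (Nielsen and Chuang): reversible embedding via Toffoli gates followed by Bennett's compute--copy--uncompute to clear garbage, giving an $\Ord{\mathcal T_{cl}}$ overhead. The paper states this as a fact without proof, deferring to that reference, so your proposal matches the intended justification.
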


For instance, for an arbitrary vector $\vb{v} \in \mathbb{R}^n$, the classical operation $\max(\vb{v}, 0)$ can be implemented using quantum circuits.
\begin{lemma}
\label{quantum decomposition}
    Consider a vector $\vb{v} \in \mathbb{R}^n$.
    For any $j \in [n]$, the entry $v_j$ may be represented up to desired accuracy using fixed point binary encoding as in \cref{encoding}.  
    Then, $\forall j \in [n]$, $v_j$ may be decomposed into a difference between two non-negative components such that $v_j = (v_j)^+ - (v_j)^-$ represents positive and negative values.  Assume quantum oracle access $\QA(\vb{v})$.  We may obtain $\QA((\vb{v})^+)$ using two queries to $\QA(\vb{v})$ and additional quantum circuits of depth $\Ord{c}$, where $c$ is the number of bits in the binary encoding of $\vb{v}$. 
\end{lemma}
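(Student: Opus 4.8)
The plan is to realize $(v)^+=\max(v,0)$ as a reversible, sign-controlled copy, using one fresh ancilla register together with the self-inverse property of the XOR-type oracle $\bm{QA}(V)$ to keep that ancilla clean. First I would fix the convention that the $c$-bit fixed-point encoding of Definition \ref{encoding} carries one designated sign bit alongside the magnitude bits that feed $\mathcal Q$. Under this convention $(v)^+$ is the string whose sign bit is $0$ and whose magnitude bits equal those of $v$ when $v\geq 0$ and are all zero when $v<0$; equivalently, $(v)^+$ is obtained from $v$ by zeroing every bit whenever the sign bit is set.

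The construction then proceeds on an input $\ket j \ket b$ adjoined with a fresh ancilla $\ket{0^c}$:
\begin{enumerate}
\item Apply $\bm{QA}(V)$ on the index register and the ancilla to load the entry, giving $\ket j \ket b \ket{0^c} \to \ket j \ket b \ket{v_j}$. This is the first query.
\item Apply a sign-controlled copy: conditioned on the sign bit of the ancilla being $0$, XOR each magnitude bit of the ancilla into the corresponding bit of $b$. Since $(v_j)^+$ has sign bit $0$ and magnitude bits equal to those of $v_j$ exactly when $v_j\geq 0$, this yields $\ket j \ket{b \oplus (v_j)^+}\ket{v_j}$. The operation is a sequence of $\Ord c$ Toffoli gates (one per magnitude bit, each sharing the sign bit as control), hence a circuit of depth $\Ord c$; its existence as a reversible circuit is guaranteed by Fact \ref{rls}.
\item Apply $\bm{QA}(V)$ once more on the index register and the ancilla. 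Because the oracle acts by XOR it is its own inverse, so this step restores the ancilla to $\ket{0^c}$, giving $\ket j \ket{b \oplus (v_j)^+}\ket{0^c}$. This is the second query.
\end{enumerate}
Discarding the now-disentangled ancilla leaves the map $\ket j \ket b \to \ket j \ket{b \oplus (v_j)^+}$, which is precisely $\bm{QA}((V)^+)$, at a cost of two queries to $\bm{QA}(V)$ and $\Ord c$ additional depth. Linearity of all three steps extends the identity from basis states to arbitrary superpositions over $j$ and $b$, so no separate argument for the superposition oracle is required.

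The main obstacle I anticipate is not the arithmetic but the bookkeeping of the ancilla: the conditional copy must be performed \emph{before} the ancilla is uncomputed, and the uncomputation must restore $\ket{0^c}$ exactly, so that the ancilla factors out and $\bm{QA}((V)^+)$ is a genuine oracle rather than one entangled with garbage. This is exactly why two queries — a load and an unload — are needed rather than one. A secondary point requiring care is the sign-bit convention in Definition \ref{encoding}, since the bare $\mathcal Q$ encoding is non-negative; making the sign bit explicit is what allows a single sign-controlled copy to compute $(v)^+$ correctly, and the same circuit, controlled instead on the sign bit being $1$, would produce $\bm{QA}((V)^-)$ by an identical argument.
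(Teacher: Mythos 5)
Your proposal is correct and follows essentially the same route as the paper's proof: load $v_j$ into an ancilla with one oracle query, perform a sign-conditioned copy into the output register with $\Ord{c}$ gates, and uncompute the ancilla with a second query. Your explicit treatment of the sign-bit convention and the XOR self-inverse property is a slightly more careful rendering of the same argument, not a different approach.
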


\begin{proof}
Consider the element-wise operations
\be
\ket j \ket{\bar 0} \ket{\bar 0} \to \ket j \ket{v_j} \ket{\bar 0} &\to&
\begin{dcases}
	\ket j \ket{v_j}\ket{v_j}, & \text{if } v_j \geq 0\\
    \ket j \ket{v_j}\ket{\bar 0},  & \text{otherwise}
\end{dcases}\\
&\to& \begin{dcases}
	\ket j \ket{\bar 0}\ket{v_j}, & \text{if } v_j \geq 0\\
    \ket j \ket{\bar 0}\ket{\bar 0},  & \text{otherwise}
\end{dcases}\\
\ee
This achieves
\be
 \ket j \ket{\bar 0} 
&\to& \begin{dcases}
	\ket j \ket{v_j}, & \text{if } v_j \geq 0\\
   \ket j \ket{\bar 0},  & \text{otherwise}
\end{dcases}\\
\ee
which is by definition $\QA(\max(\vb{v}, 0))$, i.e., $\QA((\vb{v})^+)$.
\end{proof}

Similarly, for arbitrary $x \in \mathbb{R}$ and $S \subseteq \mathbb{R}$, we may implement the classical operation $f(x) = x \cdot {\mathbbm{1}\{x \in S\}}$ using quantum circuits.
\begin{defn} [Quantum comparators] \label{defQC}
Let $c, n$ be positive integers,  $l$ be a non-negative integer
and $\vb{u} \in [0, 2^c]^n$ be a vector. 
Define element-wise bounded quantum access to $\vb u$ for $j\in[n]$ by the operation 
\be
\ket*{\tilde{l}} \ket{u_j}\ket{\bar 0} \to
\begin{dcases}
\ket*{\tilde l}\ket{u_j}\ket{u_j},& \text{if } l = 0,  0 \leq u_j < 1\\
    \ket*{\tilde l}\ket{u_j}\ket{u_j},& \text{if } l > 0, 2^{l - 1} \leq u_j < 2^l\\
    \ket*{\tilde l}\ket{u_j}\ket{\bar 0},              & \text{otherwise}
\end{dcases}
\ee
on $\Ord{l + c}$ qubits, where $\tilde l$ is the bit string representation of $2^l$.
We denote this access by $\QC(\vb{u}, n,c, l)$.  Denote the time for a query by $\mathcal T_{\QC(\vb{u}, n,c, l)}$. When there is no ambiguity of the inputs, we use the shorthand notation $\QC(\vb{u}, l)$.
\end{defn}

\emph{Quantum controlled rotations} --- We define quantum controlled rotations of bounded input states into amplitudes.
\begin{defn} [Quantum controlled rotation]\label{defQCR}
Let $u \in [0 ,1]$ be a number with fixed-point encoding of $c$ bits. 
Define quantum controlled rotation as the operation 
\begin{align}
\ket{u}\ket{0} \to \left(\sqrt{1 - u} \ket 0 + \sqrt{u} \ket 1\right)
\end{align}

The cost of this operation depends directly on the precision of the fixed point arithmetic model (see \cref{encoding}) used. In particular, we neglect the cost of $\Ord c$ in our computational model and assume this to be of unit cost in the following discussions.

\end{defn}

\section{Problem Statements} \label{problem_statements}
In this section we formalise the multi-option pricing problem and the CVA problem.

\subsection {Problem statements for multi-option pricing} \label{option_problem}
In this section we introduce the pricing problem for the general case of a basket of derivatives, and formalise the classical and quantum contexts.

A fairly general classical multi-option pricing problem may be phrased as follows. We have a probability space $(\Omega, \Sigma, \mathbbm Q)$, where $\Omega$ is the set of economic events,  $\Sigma$ is the sigma algebra for $\Omega$, and a probability measure $\mathbbm Q$. 
We are given a portfolio of $K$ options or other financial derivatives.
For each option, we have a discounted payoff $V\super k: \Omega \to [0,\infty)$. 
The price of each option is computed by
$\mathbbm E_{\mathbbm Q}\left[V\super k\right]$.
The problem is to determine the total portfolio value
$\sum_{k=1}^K\mathbbm E_{\mathbbm Q}\left[V\super k\right]$.
In this work, we focus on a more specialized problem. 
We are given a portfolio of $K$ options or other financial derivatives, which we can price independently.
\begin{problem}[Classical multi-option pricing problem under independent,  finite settings]\label{classical_independent_option_pricing} 
Let $K$ be a positive integer for the number of financial derivatives. Assume there exists a known integer $n$, such that for each option indexed by $k\in [K]$ we have a probability space $(\Omega\super k, \Sigma\super k, \mathbbm Q\super k)$, where  $\Omega\super k = \{ x : x\in \{0,1\}^n \}$ describes the set of economic events,  
$\Sigma^{(k)}$ is the sigma algebra for $\Omega\super k$, and $\mathbbm Q\super k$ is the probability measure.
The probability measures are given via $\VA(q\super k)$ access to the vectors $q\super k  \in [0,1]^N$ for which $N=2^n$ and $\sum_{j=1}^N q_j\super k = 1$.
Each option is defined via a discounted payoff $V\super k: \Omega\super k \to [0,\infty)$, for which we are given the vector access $\VA\left(V\super k\right)$.
The price of each option shall be computed by
\be
\mathbbm E_{q\super k}\left[V\super k\right] = \sum_{j=1}^N q\super k_j V\super k_j.
\ee
Define the random variable of the total portfolio value
$
\TV := \sum_{k=1}^K V\super k$.
The task is to evaluate 
\begin{align}
\mathbbm E_{q \super 1, \cdots, q \super K }\left[\TV\right] = \sum_{k=1}^K\mathbbm E_{q\super k}\left[V\super k\right].
\end{align}
\end{problem}

In \cref{classical_independent_option_pricing}, when we access a probability value, we select a certain $k\in[K]$ and then a certain index $j\in[N]$ to obtain $q\super k_j$. 
The natural quantum extension of this process is to be able to query both the index $k$ and the index $j$ in superposition. This ability is embodied in the next definition of the quantum version of the same problem.

\begin{problem}[Quantum multi-option pricing problem under independent,  finite settings] \label{problemCountableQuantum}
For $j \in [N]$ and $k \in [K]$, let $V\super k_j \in [0,\infty)$ be a number that can be represented by at most $c$ bits using the fixed point encoding as in \cref{encoding}. 
Given the setting in \cref{classical_independent_option_pricing}, define the matrices $Q := \left(q\super 1,\cdots, q\super K\right)$ and $V := \left(V\super 1,\cdots, V\super K\right)$, and assume quantum matrix access $\QA({\rm vec}(Q) )$ and $\QA({\rm vec}(V) )$.
\end{problem}

\subsection {Problem statements for CVA} \label{cva_problem}
We may view CVA as an adjustment of the marked-to-market value of a derivative portfolio to account for counterparty credit risk, which can be calculated as the difference between the risk-free portfolio value proposed by the BSM model and its value taking into account the possibility of default. We shall take them as a fraction of the expected positive exposure to our counterparty at the time of default. In particular, the fraction must be the $LGD$ value, for we shall be compensated by this expected loss. A detailed derivation of the CVA problem and formula is outlined in \cref{cva_derivation}.

Using the terminologies as introduced in \cref{preliminaries}, the CVA computation may be expressed as
\begin{align}
\mathbb E\left[\CVA(t_0)\right] = (1 - R) \sum_{i=0}^{N-1} \phi \left(t_i < \tau < t_{i+1}\right) \cdot \mathbbm E\left[\max(V_{t_i}, 0) | \mathcal{F}_{t_0}\right],\label{cva_formula}
\end{align}
where $t_N = T$ and $T$ is set to be at least as great as the latest expiry date of the $K$ derivative securities in our portfolio. Here $V_{t_i}$, the exposure (or value) at $t_i$ of the future portfolio value has already been discounted to $t_0$. The probability of default on the contract after the contract itself has matured is set to zero.

We give formal definitions of the problem for CVA in the classical and quantum settings.

\begin{problem}[Classical CVA with finite event space] \label{problemCVA}

We are given a derivative portfolio $V$ with the longest maturity $T$, where $T$ is a positive integer.  We introduce a time discretization,  such that we have $0 = t_0 < \cdots < t_N = T$ and $t_i$ represents the time period $[t_{i-1},  t_{i})$.
We assume that there exists a known integer $n$ such that for each $t \in [T]$, we have the pair $(\Omega\super t, \Sigma\super t)$, where  $\Omega\super t = \{ x : x\in \{0,1\}^n \}$ describes the set of economic events and $\Sigma^{(t)}$ is the sigma algebra for $\Omega\super t$. For each $t \in[T]$, we define a joint distribution measure $\mathbbm Q\super t$ for the default time $\tau$ and the economic event $\omega$. Assume independence, such that their distribution measure factors into $\mathbbm Q_\tau\super t$ and $\mathbbm Q_\omega \super t$ respectively. The joint probability measures shall be described via the vectors $q \super t  \in [0,1]^N$ for which $N=2^n$.
For each time $t$, we have a discounted exposure to the counterparty $V\super t: \Omega\super t \to (-\infty,\infty)$ and $(V\super   t)^{+} = \max\left(0, V\super t\right)$. There exists a known credit default recovery rate $R >0$. The CVA for discounted exposure $\left(V\super t\right)^{+}$ under economic event $j$ corresponding to interval $t$ is given
\begin{align}
\CVA_j \super t := \mathbbm{1}[(\tau, \omega) \in (t, j)] (1 - R) \left(V\super t_j \right)^{+}.
\end{align}
The expected CVA for the portfolio corresponding to time t shall be given by
\begin{align}
\mathbbm E_{q\super t}[\CVA\super t] = (1 - R) \sum_{j=1}^N q\super t_j \left(V\super t_j\right)^{+}.
\end{align}
Define the random variable of the total credit valuation adjustment $\CVA := \sum_{t=1}^T \CVA\super t$. The task is to evaluate
\begin{align}
\label{cva_equivalency}
\mathbb E_{q \super 1 \cdots q \super T}\left[\CVA\right] =\sum^T_{t = 1} \mathbbm E_{q\super t}\left[\CVA\super t\right].
\end{align}

For the inputs to the discounted portfolio values, we are given element-wise access to elements $V_j \super t$ with a single query of cost one for all $j \in [N]$. For the input to the joint probability measures, we consider two scenarios:
\begin{enumerate}
    \item 
     We are given element-wise access to elements $q_j \super t$ with a single query of cost one. 
    \item
    We are given element-wise sampling access to elements $q_j \super t$ with a single query of cost one. 
\end{enumerate}

\end{problem}
Note that the \cref{cva_equivalency} is equivalent to the formulation in \cref{cva_formula}, where we have used the assumption that their joint distribution measure factors to define CVA under the expectation with respect to probability measure $\mathbbm Q\super t$.

\begin{problem}[Quantum CVA with finite event space] \label{problemCVAQuantum}
Let $c$ be a chosen, positive integer such that $\forall j, t,  V\super t_j \in \mathbb{R}$ may be represented up to desired accuracy using fixed point encoding as in \cref{encoding}. Given the setting in \cref{problemCVA}, define the matrices $Q := (q\super 1 ,\cdots, q\super T)$ and $V := (V\super 1,\cdots, V\super T)$,
assume 
\begin{enumerate}
\item quantum matrix access $\QA({\rm vec} (Q))$ and $\QA({\rm vec} (V))$,
\item
quantum multi sampling access $\QS({\rm vec} (Q))$ and $\QA({\rm vec} (V))$, and knowledge of $\Vert \rm vec(Q) \Vert_1$.
\end{enumerate}
We further assume all oracle access costs of $\mathcal T_{\QA}$ and $\mathcal T_{\QS}$ are $1$.
\end{problem}

\section{Quantum Subroutines} \label{subroutines}

\subsection{General Quantum Subroutines}
We provide quantum subroutines useful for tackling the problem statements formulated in the earlier section. The following is a generalised lemma on outputs of arbitrary randomized (classical and quantum) algorithms that provide us a lower bound on success probabilities.

\begin{lemma}[Powering Lemma \cite{jerrum1986}] \label{powerLemma}
Let $A$ be a randomized algorithm estimating some quantity $\mu$.  Let the output of one pass of $A$ be denoted $\hat{\mu}$ satisfying $|\mu - \hat{\mu}| \leq \epsilon$ with probability $(1 - \gamma)$, for some $\gamma < 0.5$.  Then for any $\delta \in (0, 1)$,  repeating $A$ $\Ord {\log \left (\frac{1}{\delta}\right)}$ times and taking the median gives an estimate accurate to error $\epsilon$ with probability $\geq 1 - \delta$.
\end{lemma}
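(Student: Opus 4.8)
The plan is to use the standard median-amplification argument (the ``median trick''). First I would run the algorithm $A$ independently $m \in \Ord{\log(1/\delta)}$ times, with the exact constant to be fixed at the end, obtaining outputs $\hat\mu_1,\dots,\hat\mu_m$, and then output their median $\tilde\mu$. For each run define the indicator $X_i := \mathbbm 1[\,|\hat\mu_i - \mu| > \epsilon\,]$ of the event that run $i$ is ``bad''. By hypothesis the $X_i$ are independent and identically distributed Bernoulli variables with $\mathbbm E[X_i] = \gamma < \tfrac12$.

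The key deterministic step is to reduce the failure of the median to a counting statement: I claim that $|\tilde\mu - \mu| \le \epsilon$ whenever strictly more than half of the outputs lie in $[\mu-\epsilon,\mu+\epsilon]$. Indeed, if the median exceeded $\mu+\epsilon$ then at least $\lceil m/2\rceil$ outputs would be $\ge \tilde\mu > \mu+\epsilon$ and hence bad, and symmetrically if the median fell below $\mu-\epsilon$; either way at least half the runs would be bad. Thus the failure event is contained in $\{\sum_{i=1}^m X_i \ge m/2\}$, and it suffices to bound the probability that at least half of the runs are bad.

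The probabilistic step is then a routine Chernoff/Hoeffding estimate. Writing $X := \sum_{i=1}^m X_i$ with $\mathbbm E[X] = \gamma m$, Hoeffding's inequality gives $\mathbbm P[X \ge m/2] = \mathbbm P[X - \gamma m \ge (\tfrac12 - \gamma)m] \le \exp(-2(\tfrac12-\gamma)^2 m)$. Setting the right-hand side to at most $\delta$ and solving yields $m \ge \frac{1}{2(1/2-\gamma)^2}\log(1/\delta)$, so $m \in \Ord{\log(1/\delta)}$ suffices and the median is within $\epsilon$ of $\mu$ with probability at least $1-\delta$.

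The main obstacle — really the only subtlety — is the dependence of the hidden constant on the gap $\tfrac12 - \gamma$: the bound degrades as $\gamma \to \tfrac12$, so the $\Ord{\cdot}$ absorbs a factor $1/(1/2-\gamma)^2$. Since the hypothesis only guarantees $\gamma < \tfrac12$, I would either treat $\gamma$ as a fixed constant bounded away from $\tfrac12$ (as is implicitly intended here), or first boost $\gamma$ below, say, $1/4$ using a constant number of repetitions before taking the median, after which the constant becomes absolute. A minor point to handle cleanly is the case of even $m$ and ties in the definition of the median, which I would dispatch by choosing, say, the upper median; this does not affect the majority argument above.
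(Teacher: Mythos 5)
Your proof is correct. Note that the paper itself does not prove this lemma at all — it is imported verbatim by citation to Jerrum–Valiant–Vazirani — so there is no in-paper argument to compare against; what you have written is precisely the standard median-amplification argument that the citation stands in for: the deterministic reduction of ``median is bad'' to ``at least half the runs are bad,'' followed by a Chernoff--Hoeffding bound on the binomial tail. Your closing remarks are also the right ones to make: the hidden constant degrades as $1/(\tfrac12-\gamma)^2$ when $\gamma \to \tfrac12$ (so the $\Ord{\log(1/\delta)}$ claim implicitly treats $\gamma$ as bounded away from $\tfrac12$, which is how the paper uses it, e.g.\ with $\gamma = 1 - 8/\pi^2$ from amplitude estimation), and the tie/even-$m$ issue is harmless since either choice of median still forces at least half the runs to be bad on failure.
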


Our first quantum algorithm is related to finding the minimum or maximum entry in an arbitrary vector. Note that we may implement quantum maximum finding by an equivalent algorithm with trivial modifications. We will see that this is often useful; when we are given an arbitrary algorithm over a set of numbers with the preconditions requiring a maximum value of one, we may fulfill such conditions on arbitrary sets of numbers by finding its maximum and dividing each element of the set by this value.

\begin{lemma}[Quantum minimum finding \cite{durr1996}] \label{lemmaMin}
Let there be given quantum access to a vector $\vb{u} \in \mathbb{R}^N$ via $\QA(\vb{u})$ on $\Ord{c + \log N}$ qubits, where $c$ is the number of bits used in the fixed point binary encoding.
We may obtain by using quantum search techniques $i_{\min} = \argmin_{j\in[N]}  u_j$ with probability $> \frac{1}{2}$ using $\Ord{\sqrt{N}}$ queries and $\tOrd{c \sqrt{N}}$ additional quantum gates. By \cref{powerLemma},  we can find the minimum $i_{\min} = \argmin_{j\in[N]}  u_j$ with success probability $1-\delta$ with $\Ord{\sqrt N \log \left (\frac{1}{\delta}\right) }$ queries and $\tOrd{c \sqrt N  \log \left( \frac{1}{\delta}\right )}$ quantum gates.  Accessing the minimum value $\vb{u}[i_{\min}] = u_{\min}$ costs $\Ord 1$.
\end{lemma}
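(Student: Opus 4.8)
The plan is to implement the D\"urr--H\o yer minimum-finding algorithm and then account for the gate overhead incurred by the fixed-width encoding and the quantum access model $\bm{QA}(u)$. First I would maintain a classical \emph{threshold index} $y\in[N]$, initialized uniformly at random, whose value $u_y$ serves as the current best candidate for the minimum. The core of each round is a Grover search over $[N]$ whose marking predicate is $\{j : \mathcal Q(u_j) < \mathcal Q(u_y)\}$. To build the marking oracle I would load $u_j$ into an ancilla register via one query to $\bm{QA}(u)$, compare it against the stored value $u_y$ using a standard $c$-bit comparator circuit of depth $\Ord{c}$, flip a phase on the index register when $u_j < u_y$, and then uncompute the loaded value with a second query to $\bm{QA}(u)$. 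Thus each oracle invocation costs $\Ord{1}$ queries and $\Ord{c}$ additional gates, while the Grover diffusion operator contributes $\Ord{\log N}$ gates per iteration.

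Because the number of indices beating the current threshold is unknown and shrinks from round to round, I would run each Grover search with a geometrically increasing guess for the iteration count (the standard device for an unknown number of marked items), stopping as soon as a smaller element is observed. Whenever one is found it replaces $y$, and the process repeats; the outer loop terminates once a fixed total query budget of $c_0\sqrt N$ is exhausted, after which $y$ is returned as the candidate $i_{\min}$.

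The heart of the argument is the query-count and success-probability analysis, which I would take essentially from \cite{durr1996}. The key combinatorial observation is that the thresholds visited form a strictly decreasing chain, and that the expected Grover cost of descending from one surviving rank to a smaller one telescopes so that the total expected number of iterations is $\Ord{\sqrt N}$; Markov's inequality then shows that with the fixed budget $c_0\sqrt N$ the final threshold equals $\argmin_{j\in[N]} u_j$ with probability at least $\tfrac12$. This is the step I expect to be the main obstacle, since it couples the expected-iteration bound for Grover search with an unknown number of marked items together with the probabilistic analysis of how rapidly a uniformly random initial threshold is driven down to the true minimum; getting the constant low enough to certify success probability $\geq\tfrac12$ is the delicate part.

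Finally, I would combine the per-iteration gate count with the $\Ord{\sqrt N}$ query bound to obtain $\Ord{\sqrt N\,(c+\log N)}=\tOrd{c\sqrt N}$ total gates, absorbing the $\Ord{\log N}$ diffusion overhead into the $\tOrd{\cdot}$ notation. Invoking the Powering Lemma (Lemma \ref{powerLemma}) to boost the constant success probability to $1-\delta$ then multiplies both the query and gate counts by $\Ord{\log(1/\delta)}$, yielding the stated $\Ord{\sqrt N\log(1/\delta)}$ queries and $\tOrd{c\sqrt N\log(1/\delta)}$ gates. Since the algorithm tracks $i_{\min}=y$ explicitly, reading off the value $u_{\min}=u[i_{\min}]$ is a single $\Ord{1}$ query to $\bm{QA}(u)$.
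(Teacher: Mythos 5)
Your proposal is correct and follows exactly the route the paper intends: the paper offers no independent proof of this lemma, simply citing D\"urr--H\o yer for the constant-success-probability $\Ord{\sqrt N}$-query minimum finding and then invoking the Powering Lemma (Lemma \ref{powerLemma}) for the $\log(1/\delta)$ amplification, which is precisely the structure of your argument. Your accounting of the $\Ord{c}$ comparator and $\Ord{\log N}$ diffusion overhead per iteration, absorbed into $\tOrd{c\sqrt N}$, matches the stated gate count.
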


Next, we give the result on unbiased amplitude estimation due to \cite{cornelissen_sublinear-time_2023} and prove a corollary from it to suit our use cases.


\begin{theorem}[Unbiased amplitude estimation~\cite{cornelissen_sublinear-time_2023}]
\label{thm:unbiased_amp_est}
Let $\ket{\psi} = U \ket{\bar{0}}$ be a quantum state prepared by $U$ and let $\Pi$ be a projector with $a = \| \Pi \ket{\psi} \|^2$.
Given $K \geq 4$ and $\epsilon \in (0, 1)$, the unbiased amplitude estimation outputs an estimate $\tilde{a} \in [-2\pi, 2\pi]$ such that 
\begin{align}
    |\EE[\tilde{a}] - a| &\leq \epsilon & \Var[\tilde{a}] \leq \frac{91a}{K^2} + \epsilon \ .
\end{align}
using $\order{K\log\log(K) \log(K/\epsilon)}$ applications in expectation of the reflection operators $\id - 2\dyad{\psi}$ and $\id - 2\Pi$.
One copy of the state $\ket{\psi}$ is used, which is restored at the end of the computation with probability at least $1 - \epsilon$.
\end{theorem}

To make it more convenient to use in our case, we convert it into the following corollary using Chebyshev's inequality.

\begin{cor}
\label{cor:unbiased_amp_est}
Let $U \ket{\bar{0}} = \ket{\psi}$ be a quantum state prepared by $U$ and let $\Pi$ be a projector with $a = \| \Pi \ket{\psi} \|^2$.
Given $\epsilon, \delta \in (0, 1)$, the unbiased amplitude estimation outputs an estimate $\tilde{a} \in [-2\pi, 2\pi]$ such that 
\begin{align}
    |\EE[\tilde{a}] - a| &\leq \Ord{\epsilon^2} & \Pr(|\tilde{a} - a| \leq \epsilon) \geq 1 - \delta \ ,
\end{align}
using $\Ord{\frac{\sqrt{a}}{\epsilon} \log{\frac{1}{\delta}} \log\log{\frac{\sqrt{a}}{\epsilon}} \log{\frac{\sqrt{a}}{\epsilon^3}}}$ applications in expectation of the reflection operators $\id - 2\dyad{\psi}$ and $\id - 2\Pi$.
\end{cor}

\begin{proof}
    Using the triangle inequality, we have
    \begin{align}
        |\tilde{a} - a| \leq |\tilde{a} - \EE[\tilde{a}]| + |\EE[\tilde{a}] - a| \ .
    \end{align}
    From \cref{thm:unbiased_amp_est}, one can achieve $|\EE[\tilde{a}] - a| \leq 9a/K^2$ and $\Var[\tilde{a}] \leq 100 a/K^2$ using $\order{K \log \log(K) \log(K^3/9a)}$ applications of the reflection operators $\id - 2\dyad{\psi}$ and $\id - 2 \Pi$.
    From Chebyshev's inequality, we have
    \begin{align}
        \Pr(|\tilde{a} - \EE[\tilde{a}]| \leq \epsilon) \geq 1 - \frac{\Var[\tilde{a}]}{\epsilon^2} \geq 1 - \frac{100 a}{\epsilon^2 K^2} \ .
    \end{align}
    Set $K = \frac{15 \sqrt{a}}{\epsilon}$, which gives $|\EE[\tilde{a}] - a| \leq 9 \epsilon^2/225$ and $\Pr(|\tilde{a} - \EE[\tilde{a}]| \leq \epsilon) > 1/2$.
    Using the powering lemma, we have that applying the unbiased amplitude estimation $\Ord{\log{\frac{1}{\delta}}}$ achieves $\Pr(|\tilde{a} - \EE[\tilde{a}]| \leq \epsilon) \geq 1 - \delta$.
    The total number of applications of the reflection operators is 
    \begin{align}
        \Ord{\frac{\sqrt{a}}{\epsilon} \log{\frac{1}{\delta}} \log\log{\frac{\sqrt{a}}{\epsilon}} \log{\frac{\sqrt{a}}{\epsilon^3}}} \ .
    \end{align}

\end{proof}


\subsection{Quantum Subroutines for estimation of norms and inner products}

We use the notation $\mathbb R^+$ to denote non-negative reals $[0, \infty)$. A summary of the quantum subroutines introduced and proved is presented in \cref{tableQuantumGeneral}.

\begin{table}[H]
\begin{center}
\begin{tabular} { |c |c | c | c | c | c |c|c|c|}
\hline
Out & $\epsilon$ & Constraints & $\vb{u}$ & Access & $\vb{v}$ & Access & Queries & Lemma \\
\hline \hline
$\Vert \vb{u} \Vert_1$ & {\rm rel.} & $\max_j u_j = 1$ & $[0,1]^N$ 
& $\QA$ & n/a & n/a & $\tOrd{\frac{1}{\epsilon} \sqrt{\frac{ N}{\Vert \vb{u} \Vert_1}}  \log{\frac{1}{\delta}}}$ & ~\cref{lemmaU} \\
\hline
$\vb{u \cdot v}$ & {\rm rel.} & - & $\mathbb{R}^{+N}$ 
& $\QA$ & $\mathbb{R}^{+N}$ & $\QA$ & $\tOrd{ \frac{1}{\epsilon} \sqrt{\frac{N \left( \max_j u_j v_j \right)}{\vb{u} \cdot \vb{v}}} \log{\frac{1}{\delta}}  }$ & ~\cref{lemma_inner_product_RA} \\
\hline
$\vb{u \cdot v}$ & {\rm rel.} & - & $\mathbb{R}^{+N}$ 
& $\QS$ & $[0, 1]^N$ & $\QA$ & $\tOrd{\frac{1}{\epsilon \sqrt{\vb{u} \cdot \vb{v}}} \log{\frac{1}{\delta}}}$ & ~\cref{lemmaU_Vunit} \\
\hline
$\vb{u \cdot v}$ & {\rm add.} & - & $\mathbb{R}^{+N}$ 
& $\QS$ & $[0, 1]^N$ & $\QA$ & $\tOrd{\frac{1}{\epsilon} \log{\frac{1}{\delta}}}$ & ~\cref{lemmaU_Vunit} \\
\hline
$\vb{u \cdot v}$ & {\rm add.} & $\ell_2$-norm & $\Delta^N$ 
& $\QS$ & $\mathbb{R}^{+N}$ & $\QA$ & $\tOrd{\frac{1}{\epsilon} \log{\frac{1}{\delta}}}$ & ~\cref{bounded_norm} \\
\hline
$\vb{u \cdot v}$ & {\rm add.} & $\Var$ & $\Delta^N$ 
& $\QS$ & $\mathbb R^{+N}$ & $\QA$ & $\tOrd{\frac{\sigma}{\epsilon} \log{\frac{1}{\delta}}}$ & ~\cref{lemmaBoundedVarianceAdditive} \\
\hline
$\vb{u \cdot v}$ & {\rm rel.} & $\Var$ & $\Delta^N$ 
& $\QS$ & $\RR^{+N}$ & $\QA$ & $\tOrd{\frac{B}{\epsilon} \log{\frac{1}{\delta}}}$ & ~\cref{lemmaBoundedVarianceRelative} \\
\hline
\end{tabular}
\caption{Summary of quantum algorithms}
\label{tableQuantumGeneral}
\end{center}
\end{table}

The next lemma is for the norm estimation of a vector with non-negative entries. We assume that the vector is non-zero and has been normalized such that the largest element is one. Such a vector might be obtained by dividing first the maximum value, as discussed earlier.

\begin{lemma}[Quantum state preparation and norm estimation with relative accuracy] \label{lemmaU}
Let there be given a non-zero vector $\vb{u} \in [0, 1]^N$ and $\max_j u_j = 1$. We are given quantum access to $\vb{u}$ via $\QA(\vb{u})$. Then:

\begin{enumerate}[(i)]
    \item There exists a unitary operator that prepares the state
    $$\ket{\psi} = \frac{1}{\sqrt{N}}  \sum_{j=1}^N \ket j  \left( \sqrt{u_j } \ket{0} + \sqrt{1- u_j} \ket{1} \right)$$ with two queries and number of gates $\Ord{\log N}$.
    \item Let $\epsilon, \delta \in (0,1)$. There exists a quantum algorithm that  provides an estimate $\Gamma_{\vb{u}}$ of the $\ell_1$-norm
    $\Vert \vb{u} \Vert_1$  such that
    $\left \vert \EE[\Gamma_{\vb{u}}] - \Vert \vb{u} \Vert_1 \right \vert \leq \Ord{\epsilon^2 \Vert \vb{u} \Vert_1/N}$,
    and $\Pr( |\Gamma_{\vb{u}} - \| \vb{u} \|_1| \leq \epsilon \| \vb{u} \|_1) \geq 1 - \delta$ with query complexity $\tOrd{\frac{1}{\epsilon} \sqrt{\frac{N}{\| \vb{u} \|_1}} \log{\frac{1}{\delta}}}$.
\end{enumerate}
\end{lemma}
\begin{proof}
For (i),  prepare a uniform superposition of all $\ket j$ with $\Ord{\log N}$ Hadamard gates.
Suppose the fixed point encoding of $\vb{u}$ consists of $c$ bits.
With the quantum query access, perform
\begin{align*}
\frac{1}{\sqrt{N}} \sum_{j=1}^N \ket j \ket {0^{c + 1}} &\xrightarrow{\QA(\vb{u})} \frac{1}{\sqrt{N}}  \sum_{j=1}^N \ket j  \ket{ u_j} \ket { 0} \\
&\xrightarrow{\text{Controlled rotation}} \frac{1}{\sqrt{N}}  \sum_{j=1}^N \ket j  \ket{ u_j}  \left( \sqrt{u_j} \ket{0} + \sqrt{1-u_j} \ket{1} \right) \\
&\xrightarrow{\QA(\vb{u})} \frac{1}{\sqrt{N}}  \sum_{j=1}^N \ket j  \ket{0^c}  \left( \sqrt{u_j} \ket{0} + \sqrt{1-u_j} \ket{1} \right)
\end{align*}
The steps consist of two oracle queries and a controlled rotation. The controlled rotation (see \cref{defQCR}) is well-defined as $ 0 \leq u_j \leq 1$ and costs $\Ord{ 1}$ gates. 
Then, discarding the $\ket{0^c}$ gives the desired state $\ket{\psi}$.

For (ii),
define a unitary $\mathcal U = \id - 2 \dyad{\psi}$, which can be constructed with the unitary from (i). Define another unitary by $\mathcal V = \mathbbm 1-2 \mathbbm 1 \otimes \ket{0} \bra{0}$.
Consider the quantity
$a := \mel{\psi}{(\id \otimes \dyad{0})}{\psi} = \frac{\Vert \vb{u} \Vert_1}{N }$, for which $1/N \leq a\leq 1$, since $u_{\max} =1$ by assumption. 
The unbiased amplitude estimation algorithm in \cref{cor:unbiased_amp_est} gives an estimate $\tilde{a}$ such that $\Pr(|\tilde{a} - a| \leq \epsilon a) \geq 1 - \delta$ and $|\EE [\tilde{a}] - a| \leq \Ord{\epsilon^2 a^2}$, using $\Ord{\frac{1}{\epsilon \sqrt{a}} \log{\frac{1}{\delta}} \log\log{\frac{1}{\epsilon \sqrt{a}}} \log{\frac{1}{\epsilon^3 a^{5/2}}}}$ applications of $\mathcal U$ and $\mathcal V$ in expectation.
Let $\Gamma_{\vb{u}} := N \tilde{a}$, which gives $|\EE[\Gamma_{\vb{u}}] - \| \vb{u} \|_1| \leq \Ord{\epsilon^2 \| \vb{u} \|_1/N}$ and query complexity
\begin{align}
    \Ord{\frac{1}{\epsilon} \sqrt{\frac{N}{\| \vb{u} \|_1}} \log{\frac{1}{\delta}} \log\log(\frac{1}{\epsilon } \sqrt{\frac{N}{\| \vb{u} \|_1}}) \log(\frac{1}{\epsilon^3} \left( \frac{N}{\| \vb{u} \|_1} \right)^{5/2})} = \tOrd{\frac{1}{\epsilon} \sqrt{\frac{N}{\| \vb{u} \|_1}} \log{\frac{1}{\delta}}} \ .
\end{align}


\end{proof}

The following lemma allows us to estimate the inner products between two vectors of arbitrary, non-negative entries. 
If one of the vectors is discretized probability density mass and the other contains the values of a corresponding random variable, we see that this is leads to computing expectation values.

\begin{lemma} [Quantum inner product estimation with relative accuracy] \label{lemma_inner_product_RA}
Let $\epsilon,\delta \in(0,1)$.  Let there be two vectors $\vb{u}, \vb{v} \in [0, \infty)^N$. We are given quantum access to $u_j, v_j$ via $\QA(\vb{u}), \QA(\vb{v})$ respectively.
Then, knowing the value of  $z_{\max} := \max_j u_j v_j$, an estimate $I$ for the inner product can be provided such that $\Pr(|I - \vb{u} \cdot \vb{v}| \leq \epsilon \vb{u} \cdot \vb{v}) \geq 1 - \delta$ and $|\EE[I] - \vb{u} \cdot \vb{v}| \leq \Ord{\frac{\epsilon^2 \vb{u} \cdot \vb{v}}{N}}$.
This output is obtained with $\tOrd{\frac{1}{\epsilon} \sqrt{\frac{N z_{\max}}{\vb{u} \cdot \vb{v}}} \log{\frac{1}{\delta}}}$ query complexity.
\end{lemma}

\begin{proof}
Define the vector $\vb{z}$ such that $z_j = u_j v_j$ via quantum oracles $\QA(\vb{u})$ and $\QA(\vb{v})$.  Then, we have $\Vert{\vb{z}} \Vert_1 = \vb{u} \cdot \vb{v}$.

If $z_{\max} = 0$, the estimate for the inner product is $I = 0$ and we are done.
Otherwise, apply \cref{lemmaU} with the vector $ \frac{ \vb{z}}{ z_{\max}}$ to obtain an estimate $\tilde{a}$ of the norm $a := \left \Vert  \frac{ \vb{z}}{ z_{\max}} \right \Vert_1$ such that
$\left \vert \EE[\tilde{a}] - a \right \vert \leq \Ord{\frac{\epsilon^2 a}{N}}$,
and $\Pr( |\tilde{a} - a| \leq \epsilon a) \geq 1 - \delta$ with query complexity $\tOrd{\frac{1}{\epsilon} \sqrt{\frac{N}{a}} \log{\frac{1}{\delta}}} = \tOrd{\frac{1}{\epsilon} \sqrt{\frac{N z_{\max}}{\vb{u} \cdot \vb{v}}} \log{\frac{1}{\delta}}}$.
Set $I := z_{\max} \tilde{a}$, which gives $\Pr(|I - \vb{u} \cdot \vb{v}| \leq \epsilon \vb{u} \cdot \vb{v}) \leq 1 - \delta$ and $|\EE[I] - \vb{u} \cdot \vb{v}| \leq \Ord{\frac{\epsilon^2 \vb{u} \cdot \vb{v}}{N}}$.


\end{proof}

An immediate theorem results from the conclusions drawn in \cref{lemma_inner_product_RA} to estimate the sum of element-wise products of two matrices of equivalent size.

\begin{theorem} \label{corTrace}
Given quantum access to an element-wise $c$-bit representation of the matrices $A, B \in \mathbbm [0,\infty)^{N \times M}$ according to \cref{defQM} and knowledge of $z_{\max} := \max_{j\in[N], k\in[M]} A_{jk} B_{jk}$. Then, an estimate $I$ for $t = {\rm tr} \{ A^T B \}$ can be provided such that $\vert \EE[I] - t \vert \leq \Ord{\frac{t \epsilon^2}{NM}}$ and $\Pr(|I - t| \leq \epsilon t) \geq 1 - \delta$.
This output is obtained with $\tOrd{\frac{1}{\epsilon} \sqrt{\frac{MN z_{\max}}{t}} \log{\frac{1}{\delta}}}$ query complexity.
\end{theorem}
\begin{proof}
Note that
${\rm tr} \{ A^T B \} = {\rm vec}(A) \cdot {\rm vec}(B)$.
The result follows immediately from \cref{lemma_inner_product_RA}.
\end{proof}

We provide a generalization of \cref{lemma_inner_product_RA} to allow for approximating the mean of an arbitrary vector with respect to non-uniform distributions under a sampling model. Note that the entries of the vector $\vb{v}$ are bounded between zero and one. 

\begin{lemma}[Quantum inner product estimation with sampling access] \label{lemmaU_Vunit}
Let $N$ be a positive integer and let $\vb{u} \in \Delta^N$, $\vb{v} \in [0, 1]^N$ be two non-zero vectors. 
We are given quantum access to $u_j, v_j$ via $\QS(\vb{u})$ and  $\QA(\vb{v})$, respectively.
Then:
\begin{enumerate}[(i)]
\item There exists a unitary operator that prepares the state
$$ \ket{\psi} = \sum_{j=1}^N \sqrt{u_j} \ket j  \left( \sqrt{v_j} \ket{0} + \sqrt{1- v_j} \ket{1} \right)$$ with three queries and number of gates $\Ord{\log N}$.

\item Let $\epsilon, \delta \in (0,1)$. There exists a quantum algorithm that  provides an estimate $\Gamma$ such that
$\Pr(\left| \Gamma - \vb{u} \cdot \vb{v} \right| \leq \epsilon \vb{u} \cdot \vb{v}) \geq 1 - \delta$ and $\left \vert \vb{u} \cdot \vb{v} - \EE[\Gamma] \right \vert \leq \Ord{\epsilon^2 ({\vb{u} \cdot \vb{v}})^2}$.
The query complexity of the algorithm is $\tOrd{\frac{1}{\epsilon} \sqrt{\frac{1}{\vb{u} \cdot \vb{v}}} \log{\frac{1}{\delta}}}$.

\item Let $\epsilon, \delta \in (0,1)$. There exists a quantum algorithm that provides an estimate $\Gamma$ such that
$\left \vert \vb{u} \cdot \vb{v} - \EE[\Gamma] \right \vert \leq \Ord{\epsilon^2}$ and $\Pr(\left| \Gamma - \vb{u} \cdot \vb{v} \right| \leq \epsilon) \geq 1 - \delta$. 
The algorithm requires $\tOrd{\frac{1}{\epsilon} \log{\frac{1}{\delta}}}$ queries.
\end{enumerate}
\end{lemma}

\begin{proof}
For (i), with the quantum query access and the quantum sampling access, perform
\begin{align*}
    \ket {\bar 0} &\xrightarrow{\QS(\vb{u}), \QA(\vb{v})} \sum_{j=1}^N \sqrt{u_j} \ket j  \ket{ v_j} \ket { 0}  \\
    &\xrightarrow{\text{Controlled rotation}} \sum_{j=1}^N \sqrt{u_j}\ket j  \ket{ v_j}  \left( \sqrt{v_j} \ket{0} + \sqrt{1 - v_j} \ket{1} \right) \\
    &\xrightarrow{\QA(\vb{v})} \sum_{j=1}^N \sqrt{u_j}\ket j  \ket{ \bar{0}}  \left( \sqrt{v_j} \ket{0} + \sqrt{1-v_j} \ket{1} \right)
\end{align*}
The rotation is well-defined as $0 \leq v_j \leq 1$ and costs $\Ord{\log{N}}$ gates. 
Then discarding the register $\ket{\bar{0}}$ gives the desired state $\ket{\psi}$.

For (ii), define a unitary $\mathcal U = \mathbbm 1 - 2 \dyad{\psi}$, which can be constructed from the unitary operations from (i). 
Define another unitary by $\mathcal V = \mathbbm 1-2 \mathbbm 1 \otimes \dyad{0}$.
Let $a := \mel{\psi}{(\id \otimes \dyad{0})}{\psi} = \vb{u} \cdot \vb{v}$.
Invoking \cref{cor:unbiased_amp_est}, there is an unbiased amplitude estimation algorithm to output an estimate $\Gamma$ such that $\left \vert \vb{u} \cdot \vb{v} - \EE[\Gamma] \right \vert \leq \Ord{\epsilon^2 ({\vb{u} \cdot \vb{v}})^2}$ and $\Pr(\left| \Gamma - \vb{u} \cdot \vb{v} \right| \leq \epsilon \vb{u} \cdot \vb{v}) \geq 1 - \delta$ using $\tOrd{\frac{1}{\epsilon} \sqrt{\frac{1}{\vb{u} \cdot \vb{v}}} \log{\frac{1}{\delta}}}$ applications of $\mathcal{U}$ and $\mathcal{V}$.



For (iii), invoking \cref{cor:unbiased_amp_est}, there is an unbiased amplitude estimation algorithm to output an estimate $\Gamma$ such that $\left \vert \vb{u} \cdot \vb{v} - \EE[\Gamma] \right \vert \leq \Ord{\epsilon^2}$ and $\Pr(\left| \Gamma - \vb{u} \cdot \vb{v} \right| \leq \epsilon) \geq 1 - \delta$ using $\tOrd{\frac{1}{\epsilon} \sqrt{\vb{u} \cdot \vb{v}} \log{\frac{1}{\delta}}} = \tOrd{\frac{1}{\epsilon} \log{\frac{1}{\delta}}}$ applications of $\mathcal U$ and $\mathcal V$, where we have used $\vb{u} \cdot \vb{v} \leq 1$.


\end{proof}

We provide lemmas on estimation of inner products on vectors with arbitrary entries subject to bounded $\ell_2$-norm. We relax the assumption on entries of the vector $\vb{v}$ such that it is only bounded from below by zero and has a finite representation. This lemma is a vectorized form of the equivalent result on random variables by Montanaro \cite{Montanaro2015}, with reworking by applying the unbiased amplitude estimation from \cite{cornelissen_sublinear-time_2023} instead of the original amplitude estimation~\cite{brassard2002quantum}.

\begin{lemma}[Quantum inner product estimation on vectors of bounded $\ell_2$-norm with additive accuracy] \label{bounded_norm}
Assume that we are given non-zero vectors $\vb{u} \in \Delta^N$ and $\vb{v} \in [0, +\infty)^N$.  
Define a vector $\vb{w}$ such that $w_j = v_j^2 u_j$. Suppose that we are guaranteed that $\Vert \vb{w} \Vert_1$ is upper bounded by some constant of $\Ord{1}$. We are given quantum access to $u_j, v_j$ via $\QS(\vb{u}), \QA(\vb{v})$ respectively. 
Then:

Let $\epsilon, \delta \in (0,1)$. There exists a quantum algorithm that provides an estimate $\Gamma$ of
$\vb{v} \cdot \vb{u}$  such that
$\Pr( \left \vert \vb{v} \cdot \vb{u} - \Gamma\right \vert \leq \epsilon (\sqrt{\Vert \vb{w} \Vert_1} + 1)^2) \geq 1 - \delta$ and $\left| \EE[\Gamma] - \vb{u} \cdot \vb{v} \right| \leq \Ord{\epsilon^2 \log{\frac{1}{\epsilon}} + \epsilon \| \vb{w} \|_1}$. The algorithm requires 
$\tOrd{\frac{1}{\epsilon} \log{\frac{1}{\delta}}}$ queries.

\end{lemma}

\begin{proof}

Let $k = \ceil{\log_2{1/\epsilon}}$ with $\epsilon < \frac{1}{2}$. 
Then,  for $l \in [k]$,  let a unitary $U_{l}$ be defined by the following transformation,
\begin{align*}
\ket{\bar 0} \xrightarrow{\QS(\vb{u})} \sum_{j=1}^N  \sqrt{u_j} \ket j \ket 0 \to  \sum_{j=1}^N \sqrt{u_j} \ket j  \ket{v_{l,j}}  =: \ket{\chi_{l}} \ ,
\end{align*}
where $v_{l, j}$ is defined as
\be
v_{l,j} :=
\begin{dcases}
    v_j, & \text{if } l = 0,  0 \leq v_j < 1 \\
    \frac{v_j}{2^l}, & \text{if } 0 < l \leq k, 2^{l - 1} \leq v_j < 2^l\\
    0, & \text{otherwise}.
\end{dcases}
\ee

The steps defining the unitary consist of oracle queries $\QS(\vb{u}), \QA(\vb{v})$, $\QC(\vb{v}, l)$ and a division of the second register by $2^l$, which may be implemented efficiently via \cref{rls}.

Let $\vb{v}_{l} = (v_{l, 1}, \cdots, v_{l, N})$ for $0 \leq l \leq k$ and let $\vb{v}_{> k} := (v_{>k, 1}, \cdots, v_{>k, N})$, where $v_{>k, j} = v_j$ if $v_j \geq 2^{k}$.
Then, $\vb{v} = \sum_{l = 0}^k 2^{l} \vb{v}_l + \vb{v}_{> k}$ and
\begin{align}
    \vb{u} \cdot \vb{v} = \sum_{l = 0}^k 2^{l} \vb{u} \cdot \vb{v}_l + \vb{u} \cdot \vb{v}_{> k} \ .
\end{align}
Since $0 \leq v_{l, j} \leq 1$ for al $j$ and $l$, we can use \cref{lemmaU_Vunit} to estimate $\vb{u} \cdot \vb{v}_l$.
Specifically, for each $0 \leq l \leq k$, \cref{lemmaU_Vunit} gives us an estimate $\tilde{a}_l$ such that 
\begin{align}
    \Pr(\left| \tilde{a}_l - \vb{u} \cdot \vb{v}_l \right| \leq \epsilon_l \sqrt{\vb{u} \cdot \vb{v}_l}) &\geq 1 - \delta_l & \left| \EE[\tilde{a}_l] - \vb{u} \cdot \vb{v}_l \right| \leq \Ord{\epsilon_l^2 \vb{u} \cdot \vb{v}_l} \ ,
\end{align}
with query complexity $\Ord{\frac{1}{\epsilon_l} \log{\frac{1}{\delta_l}}}$.
Here, we set $\delta_0 = \frac{1}{10}$ and $\delta_l = \frac{1}{10k}$ for $1 \leq l \leq k$; $\epsilon_0 = \epsilon$ and $\epsilon_l = \frac{\epsilon}{\sqrt{k}}$ for $1 \leq l \leq k$.

Now, define an estimator for $\vb{u} \cdot \vb{v}$ to be $\Gamma = \sum_{l = 0}^k 2^l \tilde{a}_l$.
Then, 
\begin{align}
    \left| \Gamma - \vb{u} \cdot \vb{v} \right| &\leq \sum_{l = 0}^k 2^l \left| \tilde{a}_l - \vb{u} \cdot \vb{v}_l \right| + \vb{u} \cdot \vb{v}_{> k} \\
    &\leq \epsilon \sqrt{\vb{u} \cdot \vb{v}_0} + \sum_{l=1}^k 2^l \frac{\epsilon}{\sqrt{k}} \sqrt{\vb{u} \cdot \vb{v}_l} + \vb{u} \cdot \vb{v}_{> k} \ .
\end{align}
For the second term,
\begin{align}
    \vb{u} \cdot \vb{v}_{> k} = \sum_{j=1}^N u_j v_{>k, j} \leq \frac{1}{2^k} \sum_{j = 1}^N u_j v_{>k, j}^2 \leq \frac{\| \vb{w} \|_1}{2^k} \ ,
\end{align}
where $w_j := u_j v_j^2$.
For the first term, 
\begin{align}
    \frac{\epsilon}{\sqrt{k}} \sum_{l = 1}^k 2^l \sqrt{\vb{u} \cdot \vb{v}_l} & \leq \epsilon \left( \sum_{l=1}^k 2^{2l} \vb{u} \cdot \vb{v}_l \right)^{1/2} \leq \epsilon \sqrt{2 \|\vb{w}\|_1} \ .
\end{align}
Above, the last inequality is because $2^{l} v_{l, j} \geq 2^{l-1}$, which implies that $2^{2l-1} u_j v_{l, j} \leq 2^{2l} u_j v_{l, j}^2$. 
Moreover, $\vb{u} \cdot \vb{v}_0 \leq 1$.
Thus, 
\begin{align}
    \left| \Gamma - \vb{u} \cdot \vb{v} \right| &\leq \epsilon \left( 1 + \sqrt{2} \sqrt{\| \vb{w} \|_1} + \| \vb{w} \|_1 \right) \leq \epsilon (\sqrt{\| \vb{w} \|_1} + 1)^2 \ .
\end{align}

On the other hand, 
\begin{align}
    \left| \EE[\Gamma] - \vb{u} \cdot \vb{v} \right| &\leq \sum_{l = 0}^k 2^l \left| \EE[\tilde{a}_l] - \vb{u} \cdot \vb{v}_l \right| + \vb{u} \cdot \vb{v}_{> k} \\
    &\leq \epsilon^2 \vb{u} \cdot \vb{v}_0 + \sum_{l=1}^k \frac{\epsilon^2}{k} \vb{u} \cdot \vb{v}_l + \frac{\| \vb{w} \|_1}{2^k} \\
    &\leq \Ord{\epsilon^2 \log{\frac{1}{\epsilon}} + \epsilon \| \vb{w} \|_1} \ .
\end{align}

Finally, we calculate the complexity bound and the success probability.
The total query complexity is 
\begin{align}
    \Ord{\frac{1}{\epsilon} + k \frac{\sqrt{k}}{\epsilon} \log{k}} = \tOrd{\frac{1}{\epsilon}} \ .
\end{align}
By union bound, the success probability is at least $4/5$.
Applying the above procedure $\log{\frac{1}{\delta}}$ times and taking the median increase the success probability to $1 - \delta$, and the total query complexity becomes $\tOrd{\frac{1}{\epsilon} \log{\frac{1}{\delta}}}$. 
\end{proof}

We provide lemmas to relax the constraint on estimation of inner products on vectors with arbitrary entries subject to bounded variance $\sigma^2$. The following two lemmas provide the similar results by Montanaro \cite{Montanaro2015} under the assumption of vector inputs.
The difference is that we use the unbiased amplitude estimation from \cite{cornelissen_sublinear-time_2023} as the workhorse, instead of the conventional one~\cite{brassard2002quantum}.

\begin{lemma}[Quantum inner product estimation on inputs of bounded variance with additive accuracy] \label{lemmaBoundedVarianceAdditive}
Given a positive constant $N$, assume that we are given two non-zero vectors $\vb{u} \in \Delta^N$ and $\vb{v} \in \RR^{+N}$.  
Define vector $\vb{w}$ such that $w_j = v_j^2 u_j$. Suppose that for some known quantity $\sigma >0$, we are guaranteed that $\Vert \vb{w} \Vert_1 \leq \sigma^2$. We are given quantum access to $u_j, v_j$ via $\QS(\vb{u}), \QA(\vb{v})$ respectively. 
Then:

Let $\epsilon,\delta \in (0,1)$. There exists a quantum algorithm that  provides an estimate $\Gamma$ of
$\vb{v} \cdot \vb{u}$  such that
$\Pr( \left \vert \vb{v} \cdot \vb{u} - \Gamma\right \vert \leq \epsilon) \geq 1 - \delta$ and $\left| \EE[\Gamma] - \vb{u} \cdot \vb{v} \right| \leq \Ord{\frac{\epsilon^2}{\sigma} \log{\frac{\epsilon}{\sigma}} + \epsilon}$. The algorithm requires $\tOrd{\frac{\sigma}{\epsilon} \log{\frac{1}{\delta}}}$ queries.
\end{lemma}

Our proof is similar that of Montanaro (Theorem 2.5. \cite{Montanaro2015}), with extra analysis on the bias.

\begin{proof}
 
Let $U_{\chi}$ be the unitary operator such that
\begin{align*}
\ket{\bar 0} \xrightarrow{\QS(\vb{u})} \sum_{j=1}^n  \sqrt{u_j} \ket j \ket{0}  \xrightarrow{\QA(\vb{v})}  \sum_{j=1}^N \sqrt{u_j} \ket j  \ket{v_j}  =: \ket{\chi}.
\end{align*}
Let $m$ be the random variable resulting from the measurement of the second register on $\ket \chi$, such that for all $j$, $\Pr( m = v_j) = u_j$. 
We have $\Var(m) = \mathbb{E}[(m- \mathbb{E}[m])^2] \leq \mathbb{E}[m^2] = \Vert \vb{w} \Vert_1 \leq \sigma^2 $.
Moreover,  let $U_{\chi'}$ be the  unitary operator such that
\begin{align}
\ket{\bar 0} \xrightarrow{\QS(\vb{u})} \sum_{j=1}^n  \sqrt{u_j} \ket j \ket 0 \to \sum_{j=1}^N \sqrt{u_j} \ket j  \ket{v_j'}  =: \ket{\chi'},\nonumber
\end{align}
where $v_j' := \frac{v_j}{\sigma}$.  
The second step consists of one application of $\QA(\vb{v})$ followed by dividing the second register by $\sigma$, which may be implemented efficiently via \cref{rls}.
Correspondingly,  let $m'$ be the random variable resulting from the measurement of the second register on $\ket{\chi'}$, such that for all $j$, $\Pr( m' = \frac{v_j}{\sigma}) = u_j$.  
It follows that $\Var(m') = \frac{\Var(m)}{\sigma^2}\leq 1$.
Let $m_0$ be the result of a random sampling from the random variable $m'$.   
Applying Chebyshev's inequality, we have $\Pr(\vert m' - \mathbb{E}[m'] \geq 3) \leq \frac{1}{9}$. 

Similarly,  let $U_{\tilde{\chi}'}$ be the unitary operator such that
\begin{align}
\ket{\bar 0} \xrightarrow{\QS(\vb{u})} \sum_{j=1}^n  \sqrt{u_j} \ket j \ket 0 \to \sum_{j=1}^N \sqrt{u_j} \ket j  \ket{\tilde{v_j}'}  =: \ket{\tilde{\chi}'},\nonumber
\end{align}
where $\tilde{v_j}' := \frac{v_j - \sigma m_0}{4 \sigma}$.  
The second step consists of one application of $\QA(\vb{v})$ followed by subtracting the second register by $\sigma m_0$ and dividing by $4\sigma$.  
These operations can be implemented efficiently.  
Let $\tilde{m}'$ be the random variable resulting from the measurement of the second register on $\ket{\tilde{\chi}'}$, such that for all $j$, $\Pr( \tilde{m}' = \frac{v_j - \sigma m_0}{4 \sigma}) = u_j$.  
Then, with probability at least $8/9$,
\begin{align*}
    \EE [\tilde{m}'^2]^{1/2} &= \left( \sum_j \left( \frac{v_j}{4\sigma} - \frac{m_0}{4} \right)^2 u_j \right)^{1/2} \\
    &\leq \frac{1}{4} \left( \sum_j \left( \frac{v_j}{\sigma} - \EE [m'] \right)^2 u_j \right)^{1/2} + \frac{1}{4} \left( \sum_j (\EE [m'] - m_0)^2 u_j \right)^{1/2} \\
    &\leq \frac{1}{4\sigma} \left( \sum_j v_j^2 u_j \right)^{1/2} + \frac{3}{4} \\
    &\leq \frac{1}{4\sigma} \sqrt{\| \vb{w} \|_1} + \frac{3}{4} \leq 1 \ .
\end{align*}

Note that $\tilde{v}_j'$ is not necessarily positive.
To apply \cref{bounded_norm}, we need to construct positive random variables from it.
Define the unitary operator $U_{-}$ that maps $\ket{\tilde{v}'_j} \to \ket{-\tilde{v}'_j}$.  
Then,  let $U_{\tilde{\chi}_+'}$ be the unitary operator such that 
\begin{align}
\ket{\bar 0} \xrightarrow{\QS(\vb{u})} \sum_{j=1}^n  \sqrt{u_j} \ket j \ket 0 \xrightarrow{\QA( \tilde{\vb{v}}'_+)}  \sum_{j=1}^N \sqrt{u_j} \ket j  \ket{\tilde{v}'_{j, +}}  =: \ket{\tilde{\chi}_+'},\nonumber
\end{align}
and $U_{\tilde{\chi}_-'}$ be the unitary operator such that
\begin{align*}
\ket{\bar 0} \xrightarrow{\QS(\vb{u})} \sum_{j=1}^n  \sqrt{u_j} \ket j \ket{0} \xrightarrow{\QA( \tilde{\vb{v}}{'} _-)}  \sum_{j=1}^N \sqrt{u_j} \ket j  \ket{\tilde{v}'_{j, -}}  =: \ket{\tilde{\chi}_-'},
\end{align*}
where $\tilde{v}'_{j,+} = \tilde{v}_j'$  if  $\tilde{v}_j' \geq 0$ and $0$ otherwise.  
Similarly, $\tilde{v}'_{j,-} = -\tilde{v}'_j$  if  $\tilde{v}'_j \leq 0$ and 0 otherwise.
Here, $U_{\tilde{\chi}_+'}$ may be implemented by one application of $U_{\tilde{\chi}'}$ and invoking \cref{quantum decomposition}. Also, $U_{\tilde{\chi}_-'}$ may be implemented by one application of $U_{\tilde{\chi}'}$,  $U_{-}$ on the second register and invoking \cref{quantum decomposition}.

Let $\tilde{m}'_+, \tilde{m}'_-,$ be the random variable resulting from the measurement of the second register on $\ket{\tilde{\chi}_+'},  \ket{\tilde{\chi}_-'}$ respectively.  
Up to this point, we recall that $\tilde{\vb{v}} = \frac{1}{4\sigma} \vb{v} - \frac{m_0}{4} \vb{1}$, which means that
\begin{align}
    \vb{v} = \sigma (m_0 \vb{1} + 4 \tilde{\vb{v}}') = \sigma (m_0 \vb{1} + 4 \tilde{\vb{v}}'_+ - 4 \tilde{\vb{v}}'_-) \ .
\end{align}
Moreover, define two vectors $\tilde{\vb{w}}'_+$ and $\tilde{\vb{w}}'_-$ by $\tilde{w}'_{j,+} := \tilde{v}'^{2}_{j,+} u_j$ and $\tilde{w}'_{j,-} := \tilde{v}'^{2}_{j,-} u_j$.
Then, it is not hard to show that $\| \tilde{\vb{w}}'_+ \|_1 \leq \EE [\tilde{m}'^2] \leq 1$ and $\| \tilde{\vb{w}}'_- \|_1 \leq \EE [\tilde{m}'^2] \leq 1$.
Since $\mathbb E\left[\tilde{m}'_-\right] = \tilde{\vb{v}}'_- \cdot \vb{u}$ and $\mathbb E\left[\tilde{m}'_+\right] = \tilde{\vb{v}}'_+ \cdot \vb{u}$, we may invoke \cref{bounded_norm} to output estimate $\tilde{\mu}_+$ and $\tilde{\mu}_-$ of $\mathbb E\left[\tilde{m}'_+\right]$ and $\mathbb E\left[\tilde{m}'_-\right]$, respectively, with accuracy $\frac{\epsilon}{32\sigma}$ and bias $\Ord{\frac{\epsilon^2}{\sigma^2} \log{\frac{\epsilon}{\sigma}} + \frac{\epsilon}{\sigma}}$, using $\tOrd{\frac{\sigma}{\epsilon} \log{\frac{1}{\delta}}}$ queries.
From these, we can construct estimator $\Gamma := \sigma (m_0 + 4 \tilde{\mu}_+ - 4 \tilde{\mu}_-)$ of $\vb{u} \cdot \vb{v}$.
It is not hard to compute that the accuracy is $\left| \Gamma - \vb{u} \cdot \vb{v} \right| \leq \epsilon$ with probability $>1/2$, which can be boosted to $1 - \delta$ using the powering lemma (\cref{powerLemma}).
The bias is given by $\left| \EE[\Gamma] - \vb{u} \cdot \vb{v} \right| \leq \Ord{\frac{\epsilon^2}{\sigma} \log{\frac{\epsilon}{\sigma}} + \epsilon}$.
The query complexity is $\tOrd{\frac{\sigma}{\epsilon} \log{\frac{1}{\delta}}}$.

\end{proof}

\begin{lemma}[Quantum inner product estimation on inputs of bounded variance with relative accuracy] 
\label{lemmaBoundedVarianceRelative}

Assume we are given non-zero vectors $\vb{u} \in \Delta^N$ and $\vb{v} \in \RR^{+N}$.  
Define a vector $\vb{w}$ such that $w_j = v_j^2 u_j$, and a vector $\vb{z}$ such that $z_j = v_j u_j$. 
Suppose that for some known quantity $B$, we are guaranteed that $\frac{\Vert \vb{w} \Vert_1}{\Vert \vb{z} \Vert_1^{2}} \leq  B $. We are given quantum access to $u_j, v_j$ via $\QS(\vb{u}), \QA(\vb{v})$ respectively. 
Then:

Let $\epsilon,\delta \in (0,1)$. There exists a quantum algorithm that  provides an estimate $\Gamma$ of
$\vb{v} \cdot \vb{u}$ such that
$\Pr( \left \vert \vb{v} \cdot \vb{u} - \Gamma\right \vert \leq \epsilon \vb{v} \cdot \vb{u}) \geq 1-\delta$ and $\left \vert \vb{v} \cdot \vb{u} - \EE[\Gamma] \right \vert \leq \epsilon' \vb{v} \cdot \vb{u}$, where $\epsilon' = \Ord{\frac{\epsilon^2}{B^2} \log{\frac{B}{\epsilon}} + \epsilon B}$. The algorithm requires $\tOrd{\frac{B}{\epsilon} \log{\frac{1}{\delta}}}$ queries.
\end{lemma}

Yet again, our proof follows similar line to that of Montanaro (Theorem 2.6. \cite{Montanaro2015}), with the extra analysis on bias.

\begin{proof}
Let $U_{\chi}$ be the unitary operator such that 
\begin{align*}
\ket{\bar 0} \xrightarrow{\QS(\vb{u})} \sum_{j=1}^n  \sqrt{u_j} \ket j \ket 0 \xrightarrow{\QA(\vb{v})}  \sum_{j=1}^N \sqrt{u_j} \ket j  \ket{ v_j}  =: \ket{\chi}.
\end{align*}
Let $m$ be the random variable resulting from the measurement of the second register on $\ket \chi$, such that for all $j$, $\Pr(m = v_j) = u_j$. 
Note that $\EE[m] = \vb{u} \cdot \vb{v}$ and $\Var(m) = \mathbb{E}[(m - \mathbb{E}[m])^2] \leq \mathbb{E}[m^2] = \Vert \vb{w} \Vert_1$.
 
Let $k := \ceil{32B}$ and let $\tilde m = \frac{1}{k} \sum^k_{i=1} m_i$ be the mean of $k$ samples of $m$ obtained via independent measurements of the last $c_2$ second register on $\ket \chi$.  Then, the expectation $\mathbb{E}[\tilde{m}] = \mathbb{E}[m] = \Vert \vb{z} \Vert_1$ and the variance is $\Var(\tilde{m}) = \frac{1}{k} \Var(m)$.
Since we are guaranteed that $\frac{\Var(m)}{\mathbb{E}[m]^2} \leq \frac{\Vert \vb{w} \Vert_1}{\Vert \vb{z} \Vert_1^2} \leq B$, it follows that $\Var(\tilde{m}) \leq \frac{\mathbb{E}[m]^2 B}{k} \leq \frac{\mathbb{E}[m]^2}{32}$. 
Applying the Chebyshev's inequality, $\Pr\left(\vert \tilde m - \mathbb E[\tilde m] \vert \geq \frac{|\EE [\tilde{m}]|}{2} \right) \leq \frac{4 \Var(\tilde{m})}{\EE[\tilde{m}]^2} \leq \frac{1}{8}$. 
Thus, 
\begin{align}
    \frac{\EE[m]}{2} \leq \tilde{m} \leq \frac{3}{2} \EE[m] 
\end{align}
with probability at least $7/8$.

Moreover,  let $U_{\chi'}$ be the unitary operator such that 
\begin{align*}
\ket{\bar 0} \xrightarrow{\QS(\vb{u})} \sum_{j=1}^n  \sqrt{u_j} \ket j \ket 0 \to \sum_{j=1}^N \sqrt{u_j} \ket j  \ket{v_j'}  =: \ket{\chi'},
\end{align*}
where $v_j' := \frac{v_j}{\tilde{m}}$. The steps to implement this unitary consist of one application of $U_{\chi}$ followed by dividing the second register by $\tilde{m}$, which may be implemented efficiently via \cref{rls}. 
Let $m'$ be the random variable resulting from the measurement of the second register on $\ket {\chi'}$, such that for all $j$, $\Pr( m' = \frac{v_j}{\tilde{m}}) = u_j$. 
We also have $\EE[m'^2] = \frac{\EE[m^2]}{\tilde{m}^2} = \frac{\| \vb{w} \|_1}{\tilde{m}^2}$.
Invoking \cref{bounded_norm}, we have an estimate $\Gamma'$ of $\EE[m']$ with additive error
\begin{align}
    \left| \Gamma' - \EE[m'] \right| &\leq \frac{2\epsilon}{3 (2\sqrt{B} + 1)^2} (\sqrt{\EE[m'^2]} + 1)^2 \\
    &\leq \frac{\EE[m] \epsilon}{\tilde{m} (2\sqrt{B} + 1)^2} \left( \frac{\sqrt{\| \vb{w} \|_1}}{\tilde{m}} + 1 \right)^2 \\
    &\leq \frac{\EE[m] \epsilon}{\tilde{m} (2\sqrt{B} + 1)^2} \left( \frac{2 \sqrt{\| \vb{w} \|_1}}{\EE[m]} + 1 \right)^2 \\
    &= \frac{\EE[m] \epsilon}{\tilde{m} (2\sqrt{B} + 1)^2} \left( \frac{2 \sqrt{\| \vb{w} \|_1}}{\| \vb{z} \|_1} + 1 \right)^2 \leq \frac{\EE[m] \epsilon}{\tilde{m} } \ ,
\end{align}
and success probability at least $7/8$,
where we have used $\frac{\Vert \vb{w} \Vert_1}{\Vert \vb{z} \Vert_1^2} \leq B$.
The query complexity is $\tOrd{\frac{B}{\epsilon}}$.
The bias is given by
\begin{align}
    \left| \EE[\Gamma'] - \EE[m'] \right| &\leq \Ord{\frac{\epsilon^2}{B^2} \log{\frac{B}{\epsilon}} + \epsilon \frac{\| \vb{w} \|_1}{\tilde{m}^2}} \\
    &\leq \Ord{\frac{\epsilon^2}{B^2} \log{\frac{B}{\epsilon}} + \epsilon \frac{\| \vb{w} \|_1}{\EE[m]^2}} \\
    &\leq \Ord{\frac{\epsilon^2}{B^2} \log{\frac{B}{\epsilon}} + \epsilon B} \ .
\end{align}
By union bound, the total failure probability is upper bounded by $1/4$.

Define $\Gamma := \tilde{m} \Gamma'$. We have $\left| \Gamma - \EE[m] \right| \leq \epsilon \EE[m]$ with probability at least $3/4$ and $\left| \EE[\Gamma] - \EE[m] \right| \leq \epsilon' \EE[m]$, where $\epsilon' = \Ord{\frac{\epsilon^2}{B^2} \log{\frac{B}{\epsilon}} + \epsilon B}$.
Using the powering lemma (\cref{powerLemma}), we can boost the success probability to $1 - \delta$, and the query complexity becomes $\tOrd{\frac{B}{\epsilon} \log{\frac{1}{\delta}}}$.

\end{proof}

\section{Solutions to the Problem Statements} \label{solutions}
We now use the quantum subroutines proven earlier to tackle the problem statements.

\subsection{Quantum algorithm for the multi-asset portfolio pricing and the CVA problem}
The lemmas discussed can be used to solve these problems under the query access model. Recall that for quantum multi-option pricing in \cref{problemCountableQuantum}, we are given quantum matrix access to $Q := \left(q\super 1,\cdots, q\super K\right)$ and $V := (V\super 1,\cdots, V\super K)$ via oracles $\QA({\rm vec}(Q) )$ and $\QA({\rm vec}(V) )$.

\begin{theorem}[Quantum multi-asset portfolio pricing]\label{thm:multi-option-pricing}
Consider \cref{problemCountableQuantum}.
Then, the value of $z_{\max} := \max\limits_{j \in [N], k \in [K]} q_{j}\super k V_{j}\super k$ and an estimate $I$ for ${\mathbb{E}[\TV] }$ can be provided such that $|\EE[I] - \EE[\TV]| \leq \Ord{\frac{\EE[\TV] \epsilon^2}{NK}}$ and that $\vert I - \mathbb E[\TV] \vert \leq \epsilon\ \mathbb E[\TV]$ with success probability $1-\delta$.
This output is obtained with $\tOrd{\left [ \sqrt {NK} + \frac{1}{\epsilon} \sqrt{\frac{NK z_{\max}}{\mathbb E[\TV]}} \right]   \log \left (\frac{1}{\delta} \right )  }$ queries.
\end{theorem}
\begin{proof}
First, we can use the input to find $z_{\max} := \max_{jk} q_{j}\super k V_{j}\super k$ using \cref{lemmaMin}, with success probability $1 - \frac{\delta}{2}$. This takes $\Ord{\sqrt{NK} \log(\frac{1}{\delta})}$ queries and $\tOrd{\sqrt{NK} \log(\frac{1}{\delta})}$ gates.

Note that $\mathbb E[\TV] = \sum_{k=1}^K \sum_{j=1}^N q\super k_j V\super k_j = {\rm vec}(Q) \cdot {\rm vec}(V)$.
Employing \cref{corTrace} with the quantum matrix access to $Q$ and $V$, we obtain an estimate $I$ such that $|\EE[I] - \EE[\TV]| \leq \Ord{\frac{\EE[\TV] \epsilon^2}{NK}}$ and $\Pr(|I - \EE[\TV]| \leq \epsilon \EE[\TV]) \geq 1 - \frac{\delta}{2}$ with query complexity $\tOrd{\frac{1}{\epsilon} \sqrt{\frac{KN z_{\max}}{\EE[\TV]}} \log{\frac{1}{\delta}}}$.
Via the union bound, the result follows.
\end{proof}

A similar technique can be applied to the CVA problem.
Consider the CVA Problem Statement in \cref{cva_problem}. We have the formulation
\begin{align}
\mathbb E_{q \super 1 \cdots q \super T}[\CVA] =(1 - R)  \sum^T_{t = 1} \sum_{j=1}^N q\super t_j \left(V\super t_j\right)^{+}.
\end{align}

We consider the discussion of the CVA problem under settings of no additional information about its moments. Recall that for the quantum CVA setting in \cref{problemCVAQuantum}, we are given quantum matrix access to $Q := (q\super 1,\cdots, q\super K)$ and $V := (V\super 1,\cdots, V\super K)$ via oracles $\QA({\rm vec}(Q) )$ and $\QA({\rm vec}(V) )$.

\begin{theorem}[Quantum single-asset credit valuation adjustment]
Consider \cref{problemCVAQuantum}, Setting 1. Then, the value of $z_{\max} := \max\limits_{j \in [N], t \in [T]} q_{j}\super t \left(V_{j}\super t\right)^+$ and an estimate $I$ for $\mathbb E[\CVA]$ can be provided such that $|\EE[I] - \EE[\CVA]| \leq \Ord{\frac{\epsilon^2 \EE[\CVA]}{NT}}$ and $\vert I - \mathbb E[\CVA] \vert \leq \epsilon\  \mathbb E[\CVA]$ with success probability $1-\delta$.
This output is obtained with $\Ord{\left [\sqrt {NT} + \frac{1}{\epsilon} \sqrt{\frac{(1 - R)NT z_{\max}}{\mathbb E[\CVA]}} \right]   \log \left (\frac{1}{\delta} \right )  }$ queries.
\end{theorem}

\begin{proof}
With the quantum access $\QA({\rm vec}(V))$ we can invoke \cref{quantum decomposition} to obtain quantum access to the non-negative part of the vector, i.e., $\QA({\rm vec}\left(V^+\right))$. First, we can use the input to find $z_{\max} := \max_{jk} q_{j}\super t \left(V_{j}\super t\right)^+$ using \cref{lemmaMin}. This takes $\Ord{\sqrt{NT} \log(\frac{1}{\delta})}$ queries and $\tOrd{\sqrt{NT} \log(\frac{1}{\delta})}$ gates.

Note that i) $0 \leq R \leq 1$ and ii) ${\rm \frac{\mathbb E[\CVA]}{1 - R} } = \sum_{t=1}^T \sum_{j=1}^N q\super t_j \left(V\super t_j\right)^+ = {\rm vec}(Q) \cdot {\rm vec}(V^+)$. Employing \cref{corTrace} with the quantum matrix access to $Q$ and $(V)^+$, we obtain an estimate $I$ such that $|\EE[I] - \EE[\CVA]| \leq \Ord{\frac{\epsilon^2 \EE[\CVA]}{NT}}$ and $\Pr(|I - \EE[\CVA]| \leq \epsilon \EE[\CVA]) \geq 1 - \frac{\delta}{2}$ with query complexity $\tOrd{\frac{1}{\epsilon} \sqrt{\frac{(1 - R)TN z_{\max}}{\EE[\CVA]}} \log{\frac{1}{\delta}}}$.
Via the union bound, the result follows.

\end{proof}

\subsection{Quantum algorithm for the CVA problem in the Black-Scholes-Merton setting}
We may consider the discussion of the CVA problem under settings of additional constraints up to the second order moments. 

In most cases, some information is known about the distribution of future asset prices. We introduce the theory of asset pricing relevant to the CVA pricing. Financial derivatives have payoff functions that are dependent on the trajectory of the underlying assets. There is significant literature \cite{hull2015options} behind modelling these asset price dynamics, the most influential of which is the work of Black, Scholes and Merton (BSM) \cite{Black1973,Merton1973}. The BSM model derives the price of an option on an asset modelled as a geometric Brownian motion. In particular, the dynamics of a stock price $S_t$ is captured by the SDE \cite{Shreve2004}
\begin{align}
\dd{S_t} = \alpha S_t \dd t + \sigma S_t \dd W_t,
\end{align}
where $\dd W_t$ is the Brownian increment, $\alpha$ is the drift, and $\sigma$ is the volatility of the asset price. The Brownian motion $W_t$ is defined under some probability measure $\mathbb P$. Using Ito's Lemma, it can be shown that the SDE can be solved:
\begin{align}
S_t = S_{0}\exp\left\{\sigma W_t + \left(\alpha - \frac{\sigma ^2}{2}\right)t\right\}.
\end{align}
We introduced in the \cref{subsec:financial_preliminaries} the concept of discounting, which is used to determine the present value of future asset valuations. Consider the money market/bank account $B_t$. Assume that the interest rates are constant. For some interest rate $r \in (0, 1)$, investing in the money market account has value $B_t = B_0 \exp\{rt\}$ at time $t$. Assume $B_0 = 1$. The First Fundamental Theorem of Asset Pricing states that the principle of no-arbitrage is equivalent to the existence of a risk-neutral measure; discounted asset prices are martingales under the risk-neutral measure $\mathbb Q$. In particular, as the model economy is adapted to the filtration $\mathcal F_t$, we have
\begin{align}
\frac{S_t}{B_t} = \mathbb E_\mathbb Q{\left[\frac{S_T}{B_T} | \mathcal F_t\right]}.
\end{align}
The asset price dynamics under the risk-neutral measure can be written as
\begin{align}
\dd S_t = r S_t \dd t + \sigma S_t \dd \tilde{W_t},
\end{align}
where $\tilde W_t$ is $\mathbb Q$-Brownian.

We are interested in obtaining the present value of a derivative. Specifically, the price of a derivative at time $t$ should be $\exp\{-r(T-t)\}\mathbb E_\mathbb Q{[f(S_T) | \mathcal F_t]}$, where $f$ is the payoff function of the underlying asset $S$ maturing at time $T$. In the case of simple payoff functions, the solutions to the SDE can be determined analytically. However, in the case of multivariate portfolios or complex payoff functions as in exotic derivatives, no closed-form solutions exist. Instead, numerical approaches such as Monte Carlo methods or finite difference approximation schemes are used. Monte Carlo sampling provides a general approach and is an integral pricing tool in a derivatives desk, allowing not only for complex payoffs but also to model other stylized facts such as joint dependencies, heavy-tailed distributions \cite{tankov2003} and fractional Brownian motion \cite{zhu2021}. Furthermore, when other parameters such as the volatility or interest rates are modelled as stochastic processes, the Monte Carlo approach is favored. The abstract view of valuing a derivative portfolio can be outlined as such :

\begin{enumerate}[(i)]
 
\item Sample paths of asset price dynamics under the risk-neutral measure $\mathbb Q$ calibrated to market variables at $\mathcal F_0$.
\item Compute asset prices of each path.
\item Compute the derivative payoff using payoff functions $f$.
\item Take the mean $\tilde{\mu}$ of the discounted payoffs over the samples.
\item The portfolio value/price is approximated by this expected value. The variance $\tilde{\lambda}^2$ of the portfolio value is the sample variance of the paths' payoffs.
\end{enumerate}

We note that this is the problem tackled in \cref{problemCountableQuantum}. A more detailed and nuanced approach is expounded upon in pricing literature under quantum settings \cite{Rebentrost2018finance}. 
For completeness, we review the pricing formula that obtains exactly the expectation and variance of the portfolio when the portfolio consists of a single, European call option. While such settings are simplifications of practical concerns in derivatives practice, the analytical models provide a useful benchmark for which numerical approaches can be compared to.

The European call option is the right, but not the obligation to purchase an underlying asset $S$ at some future maturity time $T$ at some pre-determined strike price $K$. The option payoff function is given 
\be f(S, K, T) := \max(S_T - K, 0) = (S_T - K)^+.\ee
For any $T > t$, this is necessarily random and we are tasked with pricing the option at $t$, denoted $v(t,S(t)) = \exp\{-r(T-t)\}\mathbb E_\mathbb Q{[(S_T - K)^+ | \mathcal F_t]}$. Feynman-Kac theorem asserts that $v(t,S(t))$ satisfies the BSM partial differential equations. The value of the European call can be shown to have the solution \cite{Shreve2004}
\be
\mathrm{BSM}(e, x, K, r, \sigma) := x \Phi(d_+(e, x)) - K\exp\{-re\} \Phi(d_-(e, x)),
\ee
where
\be
d_{\pm}(e, x) = \frac{1}{\sigma\sqrt{e}} \left[\log{\frac{x}{K}} + e \left(r \pm \frac{\sigma^2}{2}\right)\right],
\ee
and $\Phi$ is the c.d.f. of a standard normal, $e = (T-t)$ is the time to maturity, $x$ the current price, $K$ the strike price, $r$ the discount rate and $\sigma$ the asset volatility.

The asset price of an exponentiated Brownian motion has log-normal distributions, and the variance of the European call under the risk-neutral measure $\mathbb Q$ can be computed exactly. When $t = 0$, the variance of the payoff can be shown (Lemma 4, \cite{Rebentrost2018finance}) :
\be
\Var(f(S_T)) = \exp\left\{2rT + T\sigma^2\right\}S_0^2 \Phi\left(\tilde{d}(T, x)\right) - 2K\exp\{rT\} S_0 \Phi\left(d_+(T, x)\right) \nonumber \\
+ K^2 \Phi(d_-(T, x)) - \left(S_0 \exp\{rT\} \Phi(d_+(T, x)) - K \Phi(d_-(T, x))\right)^2,
\ee
where
\be
\tilde{d}(T, x) = \frac{1}{\sigma\sqrt{T}} \left[\log{\frac{x}{K}} + T \left(r + \frac{3\sigma^2}{2}\right)\right].
\ee
Let $\Var(f(S_T))$ be upper bounded by $\tilde{\lambda}^2$, which we know to be a fairly low order polynomial in $S_0, K, \exp \{ rT \}$ and $\exp \{ \sigma^2 T \}$. In particular, $$\Var(f(S_T)) \leq \exp\{2rT + T\sigma^2\}S_0^2 + K^2 =: \tilde\lambda^2,$$ since the Gaussian probabilities are upper bounded by one.
The expected value of the CVA is given in \cref{cva_formula}. Under the BSM settings for a European call option, the random variable may be expressed:
\be
\frac{\CVA}{(1-R)} \equiv 
\sum_{t=1}^T \phi\left(\tau \in [t]\right) 
\left(V\super t\right)^+ = \sum_{t=1}^T \phi(\tau \in [t])  \exp\{-rt\} f(S_t),
\ee
where the last term is the BSM portfolio value at time $t$. Accordingly, the variance of the CVA can be bounded
\be
\frac{\Var(\CVA)}{(1-R)^2} &\leq& 
\sum_{i=1}^T \sum_{j=1}^T \phi(\tau \in [t_i]) \phi(\tau \in [t_j]) 
\cdot \mathrm{CoV}\left[f(S_{t_i}), f(S_{t_j})\right] 
\\ &\leq& T^2 \cdot \max_{k \in [T]} \Var \left(f\left(S_{t_k}\right)\right).
\ee
In the special case of the European call option, we have $\Var(\CVA) \leq T^2 (1-R)^2 \tilde{\lambda}^2$.

We further consider sampling access to $q \super t$, which we have defined to be determined via the joint probability measures $\mathbbm Q\super t$ factored by into $\mathbbm Q_\tau\super t$ and $\mathbbm Q_\omega \super t$. The probability value $q_j \super t$ can be decomposed \cite{Zapata2021} :
\be
q_j \super t := \phi(\tau \in [t]) \cdot \mathbb P_{\mathbb Q}(V_j, t) = \phi(\tau \in [t]) \cdot \mathbb P_{\log \mathcal N}(V_j | t) \cdot \mathbb P_{\mathcal U}(t),
\ee 
where $\mathbb P_{\mathcal U}(t)$ the uniform probability measure over $[T]$, and $\mathbb P_{\log \mathcal N}(V_j | t)$ is log-normal for an underlying modelled as an exponentiated Brownian motion.
Specifically, 
\be \mathbb P_{\mathcal U}(t) = \frac{1}{T} \ee
and
\be
\mathbb P_{\log \mathcal N}(V_j | t) := \frac{1}{\sigma V_j \sqrt{2 \pi}} \exp\left\{ -\frac{\left( \ln\frac{V_j}{V_0} - \left(\mu - \frac{\sigma^2}{2}\right)t \right) ^2 }{2\sigma^2 t} \right\}.
\ee
Using the formulas defined above, define
\be
\Vert \mathrm{vec(Q)} \Vert_1 := 
\sum_{t=1}^T \sum_{j=1}^N q\super t_j = \frac{1}{T} \sum_{t=1}^T \sum_{j=1}^N \phi(\tau \in [t]) \cdot \mathbb P_{\log \mathcal N}(V_j | t) \leq 1.
\ee
The default probabilities $\phi(\tau \in [t])$ are computable by bootstrapping credit curves \cite{giuseppe}. The bootstrapping technique is covered in the \cref{cds_bootsrapping}.

Recall that for the quantum CVA setting in \cref{problemCVAQuantum}, Setting 2, we assumed quantum multi-sampling and multi-vector access. Furthermore, $\Vert \rm vec(Q) \Vert_1$ is assumed to be known, which we have argued is a reasonable assumption under practical conditions. In the case of the European call option, we also have an exact bound on the CVA variance, and that $\Var(\CVA) \leq T^2 (1-R)^2 \tilde{\lambda}^2$. In general settings, we assume that a similar upper bound on $\Var(\CVA)$ is known using Monte Carlo or equivalent techniques.
 
\begin{theorem}[Quantum credit valuation adjustment on bounded variance  to additive error] \label{theorem_additive}
Consider \cref{problemCVAQuantum}, Setting 2. For some constant $\sigma > 0$ and recovery rate $R \in (0, 1)$,  we suppose that the $\CVA$ has bounded variance such that $\Var(\CVA) \leq \sigma^2 (1 - R)^2$. Then, the estimate $I$ for $\mathbb E[\CVA]$ can be provided such that $\left| \EE[I] - \EE[\CVA] \right| \leq \Ord{\frac{\epsilon^2}{(1-R) \sigma} \log{\frac{\epsilon}{(1-R)\sigma}} + \epsilon}$ and that $\vert I - \mathbb E[\CVA] \vert \leq \epsilon $ with success probability $1-\delta$.
This output is obtained with query complexity $\tOrd{\frac{\sigma(1-R)}{\epsilon} \log{\frac{1}{\delta}}}$.
\end{theorem}

\begin{proof}
Note that i) $0 \leq R \leq 1$ and ii) 
\begin{align}
{\rm \frac{\mathbb E[\CVA]}{1 - R} } = \sum_{t=1}^T \sum_{j=1}^N q\super t_j \left(V\super t_j\right)^+
= {\rm vec}(Q) \cdot {\rm vec}\left(V^+\right). 
\end{align} 
Note that
$\Var\left(\frac{\CVA}{1-R}\right) \leq \sigma^2$.
With the quantum access $\QA({\rm vec}(V))$ we can invoke \cref{quantum decomposition} to obtain quantum access to the non-negative part of the vector, i.e., $\QA({\rm vec}(V^+))$. 
Together with quantum sampling access $\QS({\rm vec}(Q))$, we can apply \cref{lemmaBoundedVarianceAdditive} to obtain an estimate $\Gamma$ of $\frac{\mathbb E[\CVA]}{1-R}$ such that $\left| \EE[\Gamma] - \frac{\EE[\CVA]}{1 - R} \right| \leq \Ord{\frac{\epsilon^2}{\sigma} \log{\frac{\epsilon}{\sigma}} + \epsilon}$ and $\Pr( \left \vert \frac{\mathbb E[\CVA]}{1- R} - \Gamma \right \vert \leq \epsilon ) \geq 1 - \delta$ with query complexity $\tOrd{\frac{\sigma}{\epsilon} \log{\frac{1}{\delta}}}$.
The result immediately follows by letting $I = (1-R)\Gamma$ and setting $\epsilon \to \frac{\epsilon}{1 - R}$. 
\end{proof}

\begin{theorem}[Quantum credit valuation adjustment on bounded variance  to relative error] \label{theorem_relative}
Consider \cref{problemCVAQuantum}, Setting 2.  For some constant $B > 0$ and recovery rate $R \in (0, 1)$,  we suppose that the $\CVA$ has bounded variance such that $\Var(\CVA) \leq B \cdot \mathbb{E}[{\CVA}]^2$. Then, the estimate $I$ for $\mathbb E[\CVA]$ can be provided such that $\left| \EE[I] - \EE[\CVA] \right| \leq \epsilon' \frac{\EE[\CVA]}{1-R}$ and that $\vert I - \mathbb E[\CVA] \vert \leq \epsilon \mathbb E[ \CVA] $ with success probability $1-\delta$, where $\epsilon' = \Ord{\frac{\epsilon^2}{B^2} \log{\frac{B}{\epsilon}} + \epsilon B}$.
This output is obtained with $\tOrd{\frac{B}{\epsilon} \log{\frac{1}{\delta}}}$ queries.
\end{theorem}

\begin{proof}
Note that i) $0 \leq R \leq 1$ and ii) 
\begin{align}
{\rm \frac{\mathbb E[\CVA]}{1 - R} } = \sum_{t=1}^T \sum_{j=1}^N q\super t_j \left(V\super t_j\right)^+
= {\rm vec}(Q) \cdot {\rm vec}\left(V^+\right). 
\end{align}
Additionally, note that
\begin{align}
\frac{ \Var\left(\frac{\CVA}{1-R}\right) } {\mathbb{E}\left[{\frac{\CVA}{1- R}}\right]^2} = \frac{\frac{\Var\left(\CVA\right)}{(1 - R)^2}}{\frac{\mathbb E[\CVA]^2}{(1 - R)^2}} = \frac{\Var(\CVA)}{\mathbb E[\CVA]^2} \leq B.
\end{align}
With the quantum access $\QA({\rm vec}(V))$ we can invoke \cref{quantum decomposition} to obtain quantum access to the non-negative part of the vector, i.e., $\QA({\rm vec}(V^+))$.
Together with quantum sampling access $\QS({\rm vec}(Q))$, we can apply \cref{lemmaBoundedVarianceRelative} to obtain an estimate $\Gamma$ of $\frac{\mathbb E[\CVA]}{1-R}$ such that $\left| \EE[\Gamma] - \frac{\EE[\CVA]}{1-R} \right| \leq \epsilon' \frac{\EE[\CVA]}{1-R}$ and $\Pr( \left \vert \frac{\mathbb E[\CVA]}{1- R} - \Gamma \right \vert \leq \frac{\epsilon \mathbb E[\CVA]}{1- R} ) \geq 1 - \delta$ with query complexity $\tOrd{\frac{B}{\epsilon} \log{\frac{1}{\delta}}}$, where $\epsilon' = \Ord{\frac{\epsilon^2}{B^2} \log{\frac{B}{\epsilon}} + \epsilon B}$.
The result immediately follows by multiplying the result $\Gamma$ by $(1 - R)$.

\end{proof}

\section{Discussion and Conclusion} \label{conclusions}

\subsection{Potential quantum advantage}

Here, we discuss the potential quantum advantage compared to classical Monte Carlo methods for solving multi-option pricing and CVA.
For multi-option pricing (\cref{classical_independent_option_pricing}), suppose we are given access to $V = (V^{(1)}, \cdots, V^{(K)})$ and $Q = (q^{(1)}, \cdots, q^{(K)})$ via the vector access $\VA(\lvec(V))$ and $\VA(\lvec(Q))$.
Then, one can use the $\ell_1$-sampling technique from \cite{Tang2018} to derive the query complexity for classical Monte Carlo estimation.
Specifically, define a random variable $Z := V_i \| \lvec(Q) \|_1$ with probability $\frac{Q_i}{\| \lvec(Q) \|_1}$, where $i \in [NK]$.
Note that $\| \lvec(Q) \|_1 = K$, since each $q^{j}$ is a normalized probability vector for $j \in [K]$.
Then, $\EE[Z] = \sum_i V_i Q_i = \EE[\TV]$ and $\Var(Z) \leq \sum_i V_i^2 Q_i \| \lvec(Q) \|_1 \leq V_{\max} \| \lvec(Q) \|_1 \EE[\TV]$, where $V_{\max} := \max_{j \in [N], k \in [K]} V_j^{k}$.
Then, as in \cite[Proposition 4.2]{Tang2018}, using the standard median of means technique, one can derive the query complexity bound for relative error $\epsilon$ to be $\Ord{\frac{V_{\max} K}{\epsilon^2 \EE[\TV]} \log{\frac{1}{\delta}}}$, where $1 - \delta$ is the success probability. The classical lower bound for estimating the parameter of a Bernoulli random variable to multiplicative error is $\Omega\left(\frac{1}{\epsilon^2 p}\right)$ from Chebyshev's inequality. 
Comparing this lower bound to \cref{thm:multi-option-pricing}, the quantum algorithm achieves a quadratic speedup asymptotically in terms of $\epsilon$ and $\EE[\TV]$ for multi-option pricing, and the above classical algorithm suggests also an asymptotic quadratic speedup for $K$.

For the CVA problem (\cref{problemCVA}, setting 2), first suppose the variance bound is given by $\Var(\CVA) \leq \sigma^2 (1 - R)^2$ as in \cref{theorem_additive}. 
Then, with similar analysis as above, the query complexity for classical Monte Carlo estimation of $\EE[\CVA]$ with additive error $\epsilon$ is given by $\Ord{\frac{\sigma^2 (1 - R)^2}{\epsilon^2} \log{\frac{1}{\delta}}}$.
Next, for the relative error estimation, suppose the $\CVA$ has bounded variance $\Var(\CVA) \leq B \cdot \EE[\CVA]^2$ as in \cref{theorem_relative}.
Then, to achieve $\epsilon$ relative error, classical Monte Carlo requires $\Ord{\frac{B^2}{\epsilon^2} \log{\frac{1}{\delta}}}$ queries.
Compared with \cref{theorem_additive} and \cref{theorem_relative}, the discussed quantum algorithms also suggest a quadratic speedup over the classical counterparts.

\subsection{Contributions}
In summary, we have shown demonstrable improvements over current literature by presenting quantum algorithms for multi-option pricing and obtaining unbiased approximation of the CVA problem under settings of bounded variance. 
We argue that the assumptions of knowledge about the probability distributions with respect to default and portfolio processes are reasonable and obtainable under financial settings. 
By using the Quantum Minimum Finding subroutine and the unbiased Amplitude Estimation under the access model, we find that QMC accelerates multi-option pricing and the approximation of the CVA.
Compared to classical Monte Carlo estimation, our quantum algorithms give rise to quadratic speedups over a set of parameters of interest.

\subsection{Future Work}
We believe there are multiple directions and provide recommendations towards future work in quantum settings for CVA and in related topics. Under the CVA setting, heuristics and techniques to reduce circuit depth \cite{Zapata2021} may be employed, and its performance analysed for its application on near-term quantum devices. Additionally, financial literature on CVA is more extensive \cite{Green2015}, and we may extend quantum literature to account for bilateral credit risks, for example.

We may consider the quantum speedup of other components in the XVA, such as the Margin Valuation Adjustments (MVA). The calculation of MVA involves the use of regression techniques such as the Longstaff-Schwartz least-squares Monte Carlo method (LSMC), which was recently quantized \cite{doriguello2021quantum}. This speedup could translate to a speedup for MVA, which can be explored in future work.

\section*{Acknowledgement}
JYH thanks Dr. Rahul Jain and Dr. Steven Halim for their helpful comments in the review of this work. 
PR, BC, and DLV acknowledge support by the National Research Foundation, Singapore, and A*STAR under its CQT Bridging Grant and its Quantum Engineering Programme under grant NRF2021-QEP2-02-P05.

\textbf{Conflict of interest.} The authors declare no conflicts of interest. The authors alone are responsible for the content and writing of the paper.

\bibliographystyle{rQUF}
\bibliography{qxva}

\appendices

\section{More on CVA} \label{append_CVA}
XVA is a term used to encompass a series of value adjustments to the valuation of a portfolio. These adjustments are dependent on the profiles of the parties in question, usually the seller of the derivatives contracts and a corresponding buyer.

\begin{figure}[h]         
  \centering \includegraphics[width=15cm,height=15cm,keepaspectratio]{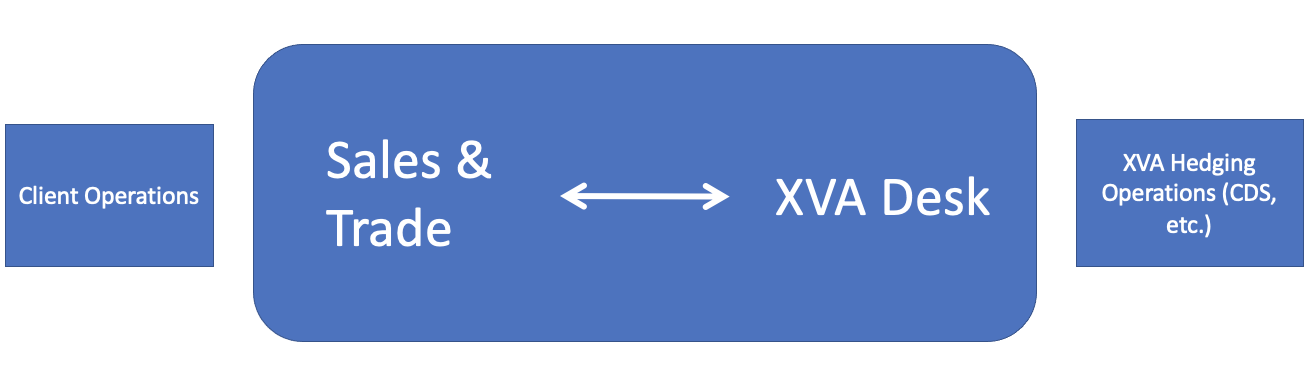}
    \caption{A XVA Trading Desk (adapted from \cite{Zeitsch2017})} \label{fig:xva_trading}  
\end{figure}

The role of the XVA desk in a trading operation is outlined in the \cref{fig:xva_trading}. The XVA desks play a primary role of valuing these adjustments, and writes protection against losses of the derivative trading operations. 
They might also optionally warehouse or hedge against these risks. 
In our discussion, we narrowed the scope of the XVA problem to just credit risks, which are risks that a party binded by a financial contract fails to make due payments to the other party. Credit Valuation Adjustment (CVA), can then be defined as the market price of credit risk on a portfolio of instruments that are marked to market. In particular, 
\[
\CVA = V(\mathrm{Default Risk Free}) - V(\mathrm{Default Risky}) . 
\]
This valuation adjustment is similar to how a chronic smoker might have to pay higher premiums on insurance. 
Since the credit crises of 2007, the Basel Framework has recommended a bilateral model.
However, many banks still use unilateral CVA and the use of bilateral models remains contentious~\cite{Green2015}. In this paper, we only consider the unilateral credit adjustments, which do not take into account one's own default risk and hence are simpler to calculate. To that effect, the CVA formula in our discussion is strictly positive.

While there is no market standard, there are two types of CVAs - unilateral CVA and bilateral CVA~\cite{brigo2013}. The difference is that unilateral CVA only takes into consideration the counterparty credit risk while bilateral CVA also takes into account the credit risk of `self', or the accounting party. Many derivative contracts such as interest rate swaps involve cash flow payments in both direction based on market conditions and as such, both parties carry the risk.

\subsection{Derivation of the CVA Formula} \label{cva_derivation}

The mathematics presented are based on previous work, mainly \cite{Green2015}, with intermediate steps provided. 
Rearranging the CVA equation, we have $V(\mathrm{DefaultRisky}) = V(\mathrm{DefaultRiskFree}) - \mathrm{CVA}$. As before, we define the terminologies:

\begin{itemize}
  \item $V_{t}$: BSM value of a basket of derivatives at time t,
  \item ${\hat{V}}_t$: economic value of a basket of derivatives at time $t$ (adjusted for the possibility of default),
  \item $V_{t}^{+}$: positive components of a basket of derivatives at time t,
  \item $V_{t}^{-}$: negative components of a basket of derivatives at time t 
  \item $C(t, t')$: cash flows for a portfolio of trades from time $t$ to $t'$, where $C(t, t') = 0$ if $t'\leq t$.
\end{itemize}

Recall that the portfolio process is adapted to filtration $\mathcal F_t$, which comes from two sources of randomness, the stock price and the default time. Note also that the unadjusted value of the portfolio $V_t$ is the expected value of its discounted future cash flows, i.e., $V_{t} = \mathbb E[C(t, T) | \mathcal{F}_{t}]$.
Let $\tau$ be the random variable of the default time. 
At the time of default, the amount $ R(V_{\tau})^{+} + (V_{\tau})^{-}$ is recovered (under the assumption that the recovered amount is computed from the BSM value of the outstanding cash flows). 
The future cash flows from time $t$ with default time $\tau$ are given by
\begin{align}
C_{t, \tau} := C(t,\tau) \mathbbm{1}_{t \leq \tau < T} + C(t,T) \mathbbm{1}_{\tau \geq T} + \left ( R(V_{\tau})^{+} + (V_{\tau})^{-}\right) \mathbbm{1}_{t\leq \tau  < T}.
\end{align}
The contributions are according to whether the default happens before or after the maturity time $T$ and the recovery amount.
Define the expected value of the cash flows $\hat{V}_{t, \tau} :=\mathbb E\left[ C_{t,\tau} | \mathcal F_t\right]$. Note that $\hat{V}_{t, \tau}$ is a functional with respect to $\tau$, akin to $\mu_X = \mathbb E[X]$ for a random variable $X$ and its expectation value.
Note that $V_{t} = (V_{t})^{+} + (V_{t})^{-}$, and thus,  
\begin{align}
R(V_{\tau})^{+} + (V_{\tau})^{-} = -(1 - R) (V_{\tau})^{+}  + V_{\tau}.
\end{align}
Using the definition $V_{\tau} = \mathbb E[C(\tau, T) | \mathcal{F}_{\tau}]$ for the second term gives
\begin{align}
\mathbb E[V_{\tau} \mathbbm{1}_{t\leq \tau < T} | \mathcal{F}_{t}] &= \mathbb E[ \mathbb E[ (C(\tau, T) {\mathbbm{1}_{t\leq \tau < T}} | \mathcal{F}_{\tau} ] | \mathcal{F}_{t}] =  \mathbb E[C(\tau, T) {\mathbbm{1}_{t\leq \tau < T}} | \mathcal{F}_{t}],
\end{align}
where we used in the second equation that $\mathbb E[ \mathbb E[ (C(\tau, T) {\mathbbm{1}_{t\leq \tau < T}} | \mathcal{F}_{\tau}] | \mathcal{F}_{t}]=\mathbb E[C(\tau, T) {\mathbbm{1}_{t \leq \tau < T}} | \mathcal{F}_{t}]$, because $\mathcal F_t \subseteq \mathcal F_\tau$ and due to the tower property of the expectation value.
Hence, the adjusted value of the portfolio at time $t$ assuming default at $\tau$ is expressed~\cite{Green2015}:
\begin{align*}
    \hat{V}_{t, \tau} 
    &= \mathbb E[ C(t, \tau) {\mathbbm{1}_{t\leq \tau < T}} | \mathcal{F}_{t}] +
    \mathbb E[C(t, T) {\mathbbm{1}_{\tau \geq T}} | \mathcal{F}_{t}] -
    \mathbb E[(1 - R) (V_{\tau})^{+} {\mathbbm{1}_{t\leq \tau < T}} | \mathcal{F}_{t}] +
    \mathbb E[C(\tau, T) {\mathbbm{1}_{t\leq \tau < T}} | \mathcal{F}_{t}],
\end{align*}
Rearranging the terms gives
\begin{align*}
    \hat{V}_{t, \tau} &= \mathbb E[(C(t, \tau) + C(\tau, T)) {\mathbbm{1}_{t\leq \tau < T}} | \mathcal{F}_{t}] +
    \mathbb E[C(t, T) {\mathbbm{1}_{\tau \geq T}} | \mathcal{F}_{t}] - \mathbb E[(1 - R) (V_{\tau})^{+} {\mathbbm{1}_{t\leq \tau < T}} | \mathcal{F}_{t}] \\
    &= \mathbb E[C(t, T) {\mathbbm{1}_{t\leq \tau < T}} | \mathcal{F}_{t}] +
    \mathbb E[C(t, T) {\mathbbm{1}_{\tau \geq T}} | \mathcal{F}_{t}] -
    \mathbb E[(1 - R) (V_{\tau})^{+} {\mathbbm{1}_{t\leq \tau < T}} | \mathcal{F}_{t}] \\
    &=  \mathbb E[({\mathbbm{1}_{t\leq \tau < T}} + {\mathbbm{1}_{\tau \geq T}})C(t, T) | \mathcal{F}_{t}] -
    \mathbb E[(1 - R) (V_{\tau})^{+} {\mathbbm{1}_{t\leq \tau < T}} | \mathcal{F}_{t}] \\
    &= \mathbb E[C(t, T) | \mathcal{F}_{t}] -
    \mathbb E[(1 - R) (V_{\tau})^{+} {\mathbbm{1}_{t\leq \tau < T}} | \mathcal{F}_{t}] \\
    &= V_{t} - \mathbb E[(1 - R) (V_{\tau})^{+} {\mathbbm{1}_{t\leq \tau < T}} | \mathcal{F}_{t}].
\end{align*}
Furthermore, note that $(V_{\tau})^{+} = 0$ if $\tau \geq T$. Assuming that no default happened until $t$, we hence obtain 
\begin{align*}
    \CVA(t) = \mathbb E[(1 - R)(V_{\tau})^{+} | \mathcal{F}_{t}].
\end{align*}
Assume deterministic Loss Given Defaults, as well as the independence of credit risk to market factors affecting the BSM pricing, such that $\mathcal F_t = \mathcal G_t \cup \mathcal H_t$, where $\mathcal{G}_t$ comes from the randomness in the BSM model and $\mathcal{H}_t$ comes from the randomness in the default time.
Specifically, $V_t$ is adapted to the filtration $\mathcal G_t \subseteq \mathcal F_t$. 
Then, we separate these two sources of randomness:
\begin{align}
    \CVA(t) &= \mathbb E\left[ \mathbb E\left[(1 - R)\left(V_{\tau}\right)^{+} | \mathcal{G}_{t}\right] | \mathcal{H}_{t}\right] = \mathbb E[ g_{\mathcal{G}_{t}}(V,\tau) | \mathcal{H}_{t}] \ ,
\end{align}
where we define $g_{\mathcal{G}_{t}}(V,\tau) := \mathbb E\left[(1 - R)\left(V_{\tau}\right)^{+} | \mathcal{G}_{t}\right]$.
Transforming it into the integral form, we obtain
\begin{align*}
    \CVA(t) &= \int^T_{t} f_{\tau| \mathcal{H}_{t}}(u) g_{\mathcal{G}_{t}}(V,u) \dd{u}
\end{align*}
where $f_{\tau| \mathcal{H}_{t}}(u)$ is the probability measure on the default time.
Then,
\begin{align}
    \CVA(t) &= \int^T_{t} f_{\tau| \mathcal{H}_{t}}(u) \mathbb E[(1 - R)(V_{u})^{+} | \mathcal{G}_{t}] \dd{u} \\
    &= (1 - R)\int^T_{t} f_{\tau| \mathcal{H}_{t}}(u) \mathbb E[(V_{u})^{+} | \mathcal{G}_{t}] \dd{u} \\
    &\mathop{=}^{\lim_{n\to \infty}} (1 - R)\sum^{n - 1}_{i=0} \phi(\tau \in (t_i, t_{i + 1}]) \mathbb E[(V_{t_i})^{+} | \mathcal{G}_{t}] \ ,
\end{align}
where $\phi(\tau \in (t_i, t_{i + 1}]) = \Phi(\tau > t_i) - \Phi(\tau > t_{i+1})$ and $\Phi(\tau > t)$ is the survival probability up to time $t$ of the counterparty.

\section{Problems in Quantitative Finance} \label{append_quant}

\subsection{Introduction to Credit Curve Bootstrapping} \label{cds_bootsrapping}
The CVA formula was observed to be a linear combination in weights of expected exposure profiles, with the weights being defined by probability distributions of the default time $\tau$. In practice, these survival probabilities are taken from historical data or derived from implied Credit Default Swap (CDS) spreads using the risk-neutral measure. \cite{Green2015} 

We outline the process for obtaining default probabilities from market data, and refer interested readers to a more detailed treatment by Castellacci \cite{giuseppe}. We assume that the credit market for the derivatives in our portfolio are liquid; that CDS are readily traded and their market data is known. We operate under the settings of the `JPMorgan model', for which we outline the assumptions below.

Let the survival probability for default time be defined by $S(t) = 1 - \phi(t)$. The hazard rate corresponding to $\tau$ is defined via the deterministic function \cite{Green2015}:
\be
S(t) = \exp\left\{- \int ^t_0 h(u) du \right\}. \label{survival}
\ee 
By extension of the deterministic hazard rate process, it follows that hazard rates are independent of the other market variables under discussion, such as the discount factor. The credit default swap is a financial derivative that allows market participants to offset credit risk. The buyer of a CDS makes payments to the seller until some maturity date $T$. In return, the seller agrees that in the event that the reference entity defaults, the seller has to payout a sum as a percentage of the insured notional value. 

These payments that the buyer of a CDS pays is in the form of a spread, which is a percentage of the notional value paid out to the seller per annum. The trade value is characterized by two different `legs', known as the floating and the fixed leg. In a liquid market, the observables are these spreads at different maturities, and it is this curve representing spreads as a function of time that is coined as the term structure. Heuristically, we might expect that higher spreads are coincident with higher probabilities of default, since the rational seller/insurer shall demand higher premiums on insuring more risky reference securities. The default probabilities are said to be `implied' by the observed term structure of spreads. Our objective is to estimate these survival probabilities implied, where maturities of different length imply different risk expectations with respect to time.

Under the assumptions that the hazard rate is piecewise constant, we may partition the time axis up to maturity such that $0 = T_0 \leq T_1 \cdots \leq T_n = T$, where for $t \in [T_{i-1}, T_i)$, $h(t) = h_i \in \mathbb R$.

The survival probabilities are expressed \cite{giuseppe}:
\be
S(t) = \exp\left\{-\sum^{n(t)}_{i = 1} h_i \Delta T_i + h_{n(t) + 1}(t - T_{n(t)})\right\},
\ee
where $n(t) = \max\{i \leq n: T_i \leq t\}$ and $\Delta T_i = T_i - T_{i - 1}$.

The JPMorgan model makes certain key assumptions outlined:

\begin{enumerate}[i]
    \item Hazard rates are piecewise constant between different maturities.
    \item Default process is independent of the interest rate process.
    \item Default leg pays at the end of each accrual period.
    \item Occurrence of default is midway during each payment period.
    \item Accrual payment is made at the end of each period.
\end{enumerate}

Under these assumptions, we may attempt to value the legs at $T_0$ maturing at $T$. The present value of the fixed leg can be denoted
\be
V_{\rm fix}(T) = s\sum^n_{i = 1} \alpha_i Z(0, T_i) \left(S(T_i) + \frac{1}{2} (S(T_{i - 1}) - S(T_i)) \right),
\ee
and the present value of the floating leg to be
\be
V_{\rm float}(T) = (1 - R) \sum^n_{i = 1} Z(0, T_i) \left( S(T_{i - 1}) - S(T_i) \right ),
\ee
where $\alpha_i$ is the day count fraction between premium dates corresponding to the period between $T_{i - 1}$ and $T_i$ of a chosen convention, and $Z(0, T_i)$ is the value of a risk-free zero coupon bond starting from $T_0$ and maturing at $T_i$. If we assume a constant risk-free interest rate $r \in (0, 1)$, then $Z(0, T_i) = \exp \{- r T_i \}$. Otherwise, we may calibrate it to the interest rate term structure.

Note that a fair contractual agreement between the buyer and the seller at inception is such that $V_{\rm fix}(t) = V_{\rm float}(t)$ for all $t$. Note that the value of the contract at the onset must be zero, since the agreement is fair. In particular, let the value of the CDS contracts at time $t$ be denoted $C(t)$, then for all $t$, $V_{\rm float}(t) - V_{\rm fix}(t) = 0$. Assuming that the market data/spreads obtained are $s_1, s_2 \cdots s_m$ corresponding to maturities $T_{n_1}, T_{n_2} \cdots T_{n_m}$, we have $0 \overset{!}{=} C(T_{n_1}) = $
\be 
(1 - R) \sum^{n_1}_{i = 1} Z(0, T_i) \left( S(T_{i - 1}) - S(T_i) \right ) - s_1\sum^{n_1}_{i = 1} \alpha_i Z(0, T_i) \left(S(T_i) + \frac{1}{2} (S(T_{i - 1}) - S(T_i)) \right). 
\label{bootstrapping1}
\ee

Note that in the equation above, $R$ is a deterministic, known recovery rate, $Z(0, T_i)$ can be derived via calibrating the interest rate term structure, and $\alpha_i$ is also known. The only unknown variable is the survival probabilities which are a function of the hazard rates. In fact, the \cref{bootstrapping1} is an implicit equation on $h_1$ and may be solved via numerical solvers.

Repeating for the next maturity, we have $0 \overset{!}{=} C(T_{n_2})  = $
\be 
(1 - R) \sum^{n_2}_{i = 1} Z(0, T_i) \left( S(T_{i - 1}) - S(T_i) \right ) - s_1\sum^{n_2}_{i = 1} \alpha_i Z(0, T_i) \left(S(T_i) + \frac{1}{2} (S(T_{i - 1}) - S(T_i)) \right),
\label{bootstrapping2}
\ee
which is an implicit equation in $h_1$ and $h_2$. Using the result from approximation of $h_1$, we may use equivalent methods to derive $h_2$. These steps may be iterated up to $T_{n_m}$, and the survival probabilities may be determined via \cref{survival} vis-a-vis the hazard rates. The cumulative distribution function $\phi (t)$ immediately follows by taking the complement of $S(t)$.

\subsection{Discount Processes and the Interest Rate Term Structure} \label{irterms}
In our discussion, we have assumed access to oracles that give discounted values of the portfolio. In the Financial Preliminary, we introduced the concept of discounting, which is necessary to obtain the present value of future valuations of a portfolio. In particular, for the undiscounted portfolio value $V'_t$, the discounted one is $V_t = \exp \{ -\int^t_0 r_u du \}  V'_t$. Note that there exists short rate models that allow rates to take negative values, as is often observed empirically. A simple modelling choice would be to choose a deterministic discount factor $r_t = r$ calibrated to historical data. We may also model interest rates as stochastic processes evolving over time. Broadly speaking, interest rate models fall into the 4 categories - short rate models, Heath-Jarrow-Morton (HJM) models, Market Models and Markov Functional Models. In the context of XVA calculations, we are most concerned with the efficiency of their computation within the Monte Carlo simulation; and hence we prefer Markovian models over non-Markovian interest rate models in practise 
\cite{Green2015}. Markovian models have the advantage that they may be pre-computed in the initialization stage and then cached for use when the Monte Carlo paths are generated.
We give an overview of the discussion of such a model, called the Extended-Vasicek model under the settings of a Heath-Jarrow-Morton (HJM) framework.

The interest rate term structure may be described via the forward rates $f(t, T, T+\Delta)$, which is the interest rate for borrowing agreed to at time $t$, to borrow from time $T$ to $T + \Delta$. The instantaneous forward rate is the forward rate as the limit of $\Delta \to 0$, and we denote it as $f(t, T)$. The market observables relating to the term structure are these instantaneous forward rates at different starting times $t_i$ with different durations $T - t_i$. The short rates $r(t)$ may be defined in terms of the instantaneous forward rates; it is the rate agreed to borrow when $T - t_i$ goes to zero, such that $r(t) = f(t, t)$.

Here we introduce the concept of the zero coupon bond (ZCB), which is an asset that pays a dollar at maturity $T$ and pays no coupons. Denote the value of such an instrument evaluated at $t$ maturing at $T$ as $Z(t, T)$. By definition, $Z(T, T) = 1$. Under the principle of no arbitrage, $Z(t, T) = \exp \{-y(T - T) (T - t)\}$, where $y(T - t)$ is the interest yield from $t$ to $T$. Since we shall be indifferent between agreeing to borrowing at discrete intervals between $t$ and $T$ and agreeing to borrow in one contract, the value of the ZCB can be expressed in terms of the forward rates. That is:
\be
Z(t, T) &=& \exp \left\{ - \sum^{n - 1}_{i = 0} f(t, t_i) \Delta t_i \right\}
\\ &\overset{\lim_{n\to \infty}}{=}& \exp \left\{- \int ^T_t f(t, u) du \right\}.
\ee
The opposite of the ZCB is the money market account, or the bank account. Previously, we argued that the bank account at time $t$ has value $B_t = B_0 \exp \{rt\}$, where $B_0$ is the initial sum invested. However, treating interest rates as a stochastic process, we may more generally express it using the short rates:
\be 
B_t = B_0 \exp \left\{ \int^t_0 r(u) du \right\}. \label{bank_account}
\ee

The First Fundamental Theorem of Asset Pricing states that discounted asset prices are martingales under the risk-neutral measure $\mathbbm Q$. In particular, the ZCB scaled by the bank account is a martingale. Using the definitions of the bank account in \cref{bank_account} and that $Z(T, T) = 1$, we have:
\be
\frac{Z(t, T)}{B_t} = \mathbb E_{\mathbb Q} \left[ \frac{Z(T, T)}{B_T} | \mathcal F_t \right]
\to Z(t, T) = \mathbb E_{\mathbb Q} \left[ \exp\left\{ -\int^T_t r(u) du \right\} \ | \mathcal F_t \right].
\ee
Note that by definition, $Z(t, T + \Delta) = Z(t, T) \exp \{ -f(t, T, T + \Delta) \Delta \}$. We may express the instantaneous forward rates in terms of the ZCB \cite{brigo2006}:
\be
f(t, T) = - \frac{\partial}{\partial T} \ln Z(t, T). \label{zcb_dynamics} 
\ee
The dynamics of the instantaneous forward rate under HJM framework are stated \cite{Green2015}:
\be 
\dd f(t, T) = \mu(t, T) \dd t + \sigma_f(t, T) \dd W(t)
\ee
where $\mu$ is stochastic drift as a function of time, $\sigma_f$ the volatility of the instantaneous forward rate, and $W(t)$ is a Brownian motion.

To satisfy the principle of no arbitrage, it can be shown that the stochastic drift has the constraints such that the dynamics of the instantaneous forward rate $f$ under risk-neutral measure $\mathbbm Q$ is \cite{brigo2006}:
\be
\dd f(t, T) = \left( \sigma_f(t, T) \int^T_t \sigma_f(t, u) \dd u \right) \dd t + \sigma_f (t, T) \dd W_t.
\ee

An interest rate model that falls under the HJM framework is the Extended-Vasicek model \cite{Green2015}, which allows for fitting of the initial term structure by allowing the long term mean reversion level to  be a function of time. In particular, it models of the dynamics of the short rate as such \cite{brigo2006}:
\be
\dd r_t = \alpha(\theta_t - r_t) \dd t + \sigma \dd W_t,
\ee
where $\theta_t$ is the long term mean of the short rates as a function of time, and $\alpha$ is the speed of mean-reversion.

Relating it to the HJM framework and using stochastic calculus techniques \cite{Shreve2004}, it can be shown that the short rate dynamics are equivalent to \cite{Green2015}:
\be
\dd r_t = \alpha \left[ \frac{1}{\alpha} \frac{\partial f(0, t)}{\partial t} + \frac{\sigma_f ^2}{2\alpha ^2} (1 - \exp \{-2 \alpha t \}) + f(0, t) - r_t \right ] \dd t + \sigma_f \dd W_t \label{drift_const}
\ee

The calibration reduces to solving for $\theta_t = \frac{1}{\alpha} \frac{\partial f(0, t)}{\partial t} + \frac{\sigma_f ^2}{2\alpha ^2} (1 - \exp \{-2 \alpha t \}) + f(0, t)$ under a deterministic mean-reversion speed $\alpha$, known $f(0, t)$ and calibrating to ZCB prices implicitly related under the \cref{zcb_dynamics} using analytical or numerical solvers.

Similarly, the volatility process may be modelled as a stochastic function of time, and calibrated to interest rate derivatives such as a strip of co-terminal European swaptions. This is not, however, needed to fit the initial term structure.

\end{document}